\pdfoutput=1
\documentclass[11pt]{article}
\usepackage[margin=1in]{geometry}
\usepackage{amsmath,amssymb,amsthm}
\usepackage{graphicx}
\usepackage{hyperref}
\usepackage{booktabs}
\usepackage{tikz}
\usetikzlibrary{positioning,arrows.meta,shapes,decorations.pathreplacing}
\usepackage{xcolor}
\usepackage{enumitem}
\usepackage[protrusion=true, expansion=false]{microtype}

\newcommand{\Aut}{\mathrm{Aut}}
\newcommand{\bZ}{\mathbb{Z}}
\newcommand{\Trow}{T_{\mathrm{row}}}
\newcommand{\Tcol}{T_{\mathrm{col}}}
\newcommand{\beq}{\beta_1}
\newcommand{\calT}{\mathcal{T}}
\newcommand{\swp}{\mathrm{swap}}
\newcommand{\rev}{\mathrm{rev}}
\newcommand{\rk}{\mathrm{rank}}

\newtheorem{theorem}{Theorem}
\newtheorem{lemma}[theorem]{Lemma}
\newtheorem{proposition}[theorem]{Proposition}
\newtheorem{corollary}[theorem]{Corollary}
\theoremstyle{definition}
\newtheorem{definition}[theorem]{Definition}
\theoremstyle{remark}
\newtheorem{remark}[theorem]{Remark}

\begin{document}

\title{Which Wallpaper Groups Arise from Tiled Games?}
\author{Matthew Fried\\
Farmingdale State College, SUNY\\
\texttt{friedm1@farmingdale.edu}}
\date{}

\maketitle

\begin{abstract}
Which discrete symmetry groups can arise from strategic interaction?
We tile the plane with copies of a bimatrix game's support complex,
joined by controlled boundary rules, and show that all seventeen
wallpaper groups act on the resulting covers: explicit generators,
each a machine-verified graph automorphism, every realization
certified as the exact toroidal quotient, with types identified by a
crystallographic recognizer in exact rational arithmetic and
cross-validated in GAP. A three-line lemma turns the classical
symmorphic/non-symmorphic distinction into a lattice classification:
realizations whose translations contain the full tile lattice exist
precisely for the thirteen symmorphic groups, and the four
non-symmorphic groups are realized at translation-lattice index
exactly two, the minimum possible: the tile is the glide's
half-step.

Two computational tracks accompany the construction. On the graph
track, quotienting a straight cover by its translations recovers the
tile exactly, $\beq(M/\calT)=\beq(K)$, and swap boundaries add
exactly $\binom m2$, independent of payoffs and of cover size. On
the game track, detecting a duplicated-strategy cover is a
linear-time payoff scan, one tile solution folds to a full
translation orbit of cover equilibria, and the tiled
correlated-equilibrium system has dimension exactly $r(d-q)+q$, with
expansion impossible. The polymatrix cover then carries the symmetry
outright: every wallpaper action, glides included, is a group of
genuine game automorphisms, equilibria collapse along any symmetry
subgroup to a folded fixed-point problem, and a decorated refinement
has game automorphism group exactly the toroidal wallpaper group.
\end{abstract}

\section{Introduction}
\label{sec:intro}

\subsection{The Realization Question}

A bimatrix game $(A,B) \in \mathbb{R}^{m\times n}\times\mathbb{R}^{m\times n}$
has a natural combinatorial object associated with it: the
\emph{support complex} $K(A,B)$, whose nodes are candidate strategy
support pairs $(S_i, S_j)$ connected by single-edit
(pivot-adjacency) edges. This complex is the Cartesian product of
two per-player support-edit graphs, and its first Betti number
$\beq(K)$, the dimension of its cycle space, is a payoff-free
invariant computed in closed form in Section~\ref{sec:prelim}; for
generic payoffs the equilibrium-feasible support pairs form a
discrete independent set inside $K$
(Appendix~\ref{app:tile}), so every $\beq$ statement in this paper
concerns the ambient arena and its symmetries. Separately, computing
a Nash equilibrium of a bimatrix game, for which Lemke and
Howson~\cite{lh1964} gave the classical pivoting algorithm, is
PPAD-complete~\cite{daskalakis2009,chen2009}.

The forward problem is familiar: given a game, find its symmetries
and exploit them. This paper studies the inverse problem:
\emph{which symmetry groups are realizable by game-theoretic
interaction at all?} We answer it for periodic planar interaction.
Tiling copies of $K(A,B)$ into a multigame cover $M_{r,s}$, with
boundary rules deciding how adjacent copies communicate, produces
graphs on which wallpaper groups, the 17 crystallographic symmetry
groups of the plane, act by automorphisms. The realization theory
splits along the classical symmorphic/non-symmorphic line (a
wallpaper group is \emph{symmorphic} when an origin can be chosen so
that every symmetry is a rotation or reflection followed by a
lattice translation; thirteen of the seventeen are, four are not),
and the split is not an accident of our constructions but a theorem
about
integral affine actions (Lemma~\ref{lem:symmorphic}): a wallpaper
group acting by affine maps on the tile grid, with translation
subgroup containing the full grid, must be symmorphic. All thirteen
symmorphic groups are realized on plain straight covers with
verified graph automorphisms (Theorem~\ref{thm:main}). A
non-symmorphic group must instead place its own translation lattice
at proper index inside the tile grid, so that the tile itself serves
as the glide's half-step; this mechanism realizes all four
non-symmorphic groups, each certified (Theorem~\ref{thm:main}(ii)).
All seventeen wallpaper groups therefore arise, and
Theorem~\ref{thm:lattice} pins the classical line exactly: a
realization whose translations contain the full tile lattice exists
if and only if the group is symmorphic. This is a complete realization result for the crystallographic
geometry of tiled equilibrium search spaces.

\subsection{Symmetry That No Game Possesses}

The non-symmorphic groups ($pg$, $pmg$, $pgg$, $p4g$) describe
symmetry that cannot exist locally. First, the game meaning of the group, in plain terms. An element of
a wallpaper group acting on a tiled game is a relabeling of who is
playing where that changes nothing strategic: a translation says the
interaction one tile over is the same interaction; a rotation says
the arena has no preferred direction; a mirror says the left-handed
and right-handed layouts are the same game; and a glide reflection
says the pattern repeats only after a shift combined with a flip, a
symmetry no single tile can exhibit. Wherever the group acts by
genuine game automorphisms (Section~\ref{sec:poly}), profiles
related by such a relabeling have identical payoffs, so equilibria
come in orbits and symmetric equilibria solve a folded problem.

A glide reflection composes a
translation with a reflection; its square is a nontrivial
translation, so it has infinite order and unbounded orbits on the
Euclidean plane, and no bounded planar object, in particular no
single tile, is invariant under one. Finite toroidal quotients
inherit finite glide-induced automorphisms, and
Section~\ref{sec:poly} realizes these as genuine game
automorphisms. In a symmorphic group one can choose an origin so that
every symmetry is a point-group element followed by a lattice
translation; in the four non-symmorphic groups no such origin
exists, so glides are unavoidable in \emph{any} realization.

Our realization of $pg$ makes the emergence quantitative. The glide
is the tile map $G(t_i,t_j) = (t_i{+}1,\,-t_j)$, a reflection
composed with a one-tile step; its square is the two-tile
translation, and the group's own translation lattice is
$\langle\Trow^2,\Tcol\rangle$, a proper index-2 sublattice of the
tile grid. The symmetry exists only at a scale coarser than any
single tile, and Lemma~\ref{lem:symmorphic} shows this coarsening is
forced, not chosen. A rigidity lemma (Lemma~\ref{lem:rigid})
sharpens the point from the payoff side: full translation invariance
of the cover payoffs forces them to be tile-periodic and hence blind
to all seam structure; any additional payoff symmetry comes from the
tile game itself, not from the spatial gluing, and the spatial
symmetry lives in the interaction pattern.

\subsection{Symmetry as Compression}

Two computational statements accompany the construction. They run
on separate tracks, one about the cover game and one about the cover
graph, and we keep them separate throughout: $M_{r,s}$ is not the
support complex of $\hat M_{r,s}$ (their vertex counts already
differ); Section~\ref{sec:poly} joins the tracks from the game side
with the polymatrix cover, and the canonical-complex refinement is
Open Problem~2.

First, a \emph{folding lemma} (Lemma~\ref{lem:fold}): the
translation-invariant cover game of the tile game $(A,B)$ has the
property that a profile is a Nash equilibrium if and only if its
tile-marginal is a Nash equilibrium of $(A,B)$. Consequently one Lemke-Howson run on the
tile, followed by lifting and the translation action, yields $rs$
distinct equilibria of the cover; conversely any cover equilibrium
projects to a tile equilibrium. Solving the cover and solving the
tile are linear-time equivalent, and detecting the tiling at the
payoff level is itself a linear-time scan
(Remark~\ref{rem:recognition}): the pipeline is detect, then fold.

Second, a \emph{fundamental domain theorem}
(Theorem~\ref{thm:fd}): quotienting the cover graph by its
translation group recovers the tile exactly, $\beq(M/\calT) =
\beq(K)$, for every straight-boundary cover, so the ambient search arena of
the quotient is exactly the tile's. For swap
boundaries the excess is exactly $\binom{m}{2}$, independent of
payoffs, of cover size, and of how many directions carry the swap.

Neither statement holds for a generic large game. All support
complexes in this paper are truncated at support size $k=2$. A
$10\times10$ cover of a $3\times3$ game has $\beq(M)=5401$; a
generic $30\times30$ game at the same truncation has
$\beq = 592{,}876$, larger by a factor of about $110$, and in general
the tiling suppresses cycle complexity by a factor of
$\Theta((rs)^{k-1})$. A generic game also has no payoff symmetry at
all outside a measure-zero set, while the cover carries a full
wallpaper group of seam symmetries by construction. The tiling
structure, not the size, is what matters
(Section~\ref{sec:notlarge}).

\subsection{Contributions}

\begin{enumerate}[label=(\arabic*)]
  \item \textbf{Realization theorem (Theorem~\ref{thm:main}).} All
    seventeen wallpaper groups are realized on plain straight
    covers, with every generator a machine-verified graph
    automorphism whose tile action is affine and whose local action
    is trivial; the infinite cover realizes $G$ exactly, and every
    finite realization is certified as the exact toroidal
    quotient. Types are
    identified by a crystallographic recognizer in exact rational
    arithmetic, validated on all 17 standard generator sets
    (Appendix~\ref{app:generators}).
  \item \textbf{The lattice classification
    (Lemma~\ref{lem:symmorphic}, Theorem~\ref{thm:lattice},
    Proposition~\ref{prop:template}).} A realization whose
    translation subgroup contains the full tile lattice exists
    exactly for the thirteen symmorphic groups; the four
    non-symmorphic groups are realized with translation lattices at
    index exactly two, the minimum Lemma~\ref{lem:symmorphic}
    permits.
    The earlier swap-glide template, whose tile action is a pure
    translation, generates only abelian groups and can present no
    non-symmorphic group; this corrects the glide-sector claims of a
    previous version.
  \item \textbf{Fundamental domain theorem (Theorem~\ref{thm:fd}).}
    Straight covers: $M/\calT\cong K$. Swap covers on $m\times m$
    tiles: $\beq(M/\calT) = \beq(K) + \binom{m}{2}$, regardless of
    the number of swap directions. Exact and payoff-independent,
    with complete proof and machine-verified data.
  \item \textbf{Folding, recognition, and CE stability
    (Lemmas~\ref{lem:rigid}, \ref{lem:fold},
    Theorem~\ref{thm:ce}).} Translation invariance forces
    tile-periodic payoffs; Nash computation on the cover game
    reduces to the tile and back after a linear-time detection scan;
    and the tiled correlated-equilibrium system of an $r$-fold cover
    has dimension exactly $r(d-q)+q$ (Theorem~\ref{thm:ce}), affine
    in $r$ with slope owned by the tile and no expansion regime. The
    polymatrix cover (Theorems~\ref{thm:polysym}
    and~\ref{thm:collapse}) then carries the symmetry outright:
    every toroidal wallpaper action is a group of genuine game
    automorphisms, glides included, equilibria collapse along any
    symmetry subgroup to a folded quotient equilibrium problem, and
    a decorated refinement achieves exactness: its automorphism
    group is exactly the toroidal wallpaper group
    (Theorem~\ref{thm:exact}).
\end{enumerate}

\subsection{Significance, Innovations, and Placement}
\label{sec:merits}

For a reader outside algorithmic game theory, the paper in one
paragraph. Take any two-player game, form the graph of its candidate
equilibrium supports, tile the plane with copies of that graph, and
choose rules for how neighboring copies communicate. We show that
every one of the seventeen crystallographic symmetry groups of the
plane acts on such tilings, by explicit closed-form generators that
a short script certifies exactly, and that the classical boundary
inside crystallography, symmorphic versus non-symmorphic, is
precisely the boundary between groups whose translations can be the
tile grid itself and groups that are forced to treat the tile as
half a step. Quotienting a tiling by its translations returns
exactly one tile, so the global object is no more complex than the
local one, and parallel local-global statements hold for the Nash equilibria
of the duplicated game and for a tiled correlated-equilibrium
system.

What is new. That some embedding of each group
exists is classical in hindsight, since the plain cover carries the
full square or hexagonal symmetry and every plane group sits inside
$p4m$ or $p6m$~\cite{conway2008}; we say so explicitly
(Remark~\ref{rem:aut}). The contributions are the parts with no
classical counterpart: closed-form generators realizing the standard
crystallographic action on a canonical equilibrium object, with
every automorphism, relation, and exact toroidal order machine
certified and independently confirmed by GAP's crystallographic
library; the lattice classification (Theorem~\ref{thm:lattice}),
an if-and-only-if with a three-line proof that turns the
symmorphic/non-symmorphic distinction into a statement about where
a group's translations may sit relative to the tile grid; the
fundamental domain theorem with its exact, payoff-independent
constant $\binom m2$, whose cleanness is a support-complex property
rather than a graph generality (Remark~\ref{rem:gameenters}); the
tile's square-filled product complex, whose first homology at $k=2$
is one copy of the exterior square of the standard representation
per player (Appendix~\ref{app:tile}); the
folding and detection pipeline, by which one tile solution yields a
full translation orbit of $rs$ equilibria of the cover game after a
linear-time scan; and the correlated-equilibrium dimension formula
$r(d-q)+q$, computed once per tile, with expansion provably
impossible.

The agenda. The graph statements and the game statements run on
clearly labeled tracks, and Section~\ref{sec:poly} joins them from
the game side: the polymatrix cover carries every toroidal wallpaper
action as a group of genuine game automorphisms, the glides
included, and equilibrium computation collapses along any symmetry
subgroup to a folded equilibrium problem on the quotient
(Theorems~\ref{thm:polysym} and~\ref{thm:collapse}), with one symmetric-equilibrium computation on the tile yielding a
wallpaper-invariant equilibrium of the $rs$-player game, the group
transporting every equilibrium to its orbit, and exactness achieved
by decoration: the
automorphism group of the decorated cover is exactly the toroidal
wallpaper group (Theorem~\ref{thm:exact}). The residual program is
stated precisely: a canonical pivot complex for the polymatrix
cover, and exactness on the graph side (Open Problems~1
and~2).

Where it sits. The paper is deliberately between communities: plane
crystallography~\cite{armstrong,conway2008}, topological graph
theory~\cite{grosstucker,frucht1939,sabidussi1960}, and equilibrium
computation~\cite{lh1964,daskalakis2009,chen2009,savani2006}. Each
ingredient is well developed inside its own field; the connection
among them, symmetry groups of the plane acting on equilibrium
search spaces with exact quotient and dimension consequences, does
not to our knowledge appear in any of the three literatures. Papers
whose contribution is a new connection between established areas,
with simple proofs and explicitly stated open programs, are the kind
this venue exists for.

\section{Preliminaries}
\label{sec:prelim}

\subsection{Support Complexes}

\begin{definition}[Support complex]
For $(A,B)\in\mathbb{R}^{m\times n}\times\mathbb{R}^{m\times n}$
and $k \geq 1$, the support complex $K(A,B)$ has node set
$V(K) = \{(S_i,S_j) : \emptyset\neq S_i\subseteq[m],
\emptyset\neq S_j\subseteq[n], |S_i|,|S_j|\leq k\}$,
with an edge between $(S_i,S_j)$ and $(S_i',S_j')$ iff they
differ by exactly one action added to or removed from one player's
support. The \emph{boundary nodes} $\partial K = \{(\{i\},\{j\})\}$
are the pure-strategy pairs, $|\partial K| = mn$.
\end{definition}

Let $H_k(m)$ be the \emph{support-edit graph}: vertices the
nonempty $S\subseteq[m]$ with $|S|\leq k$, edges the pairs with
$|S\,\triangle\,S'| = 1$, so that
$T_k(m) := \sum_{s=1}^k \binom{m}{s} = |V(H_k(m))|$ and
$D_k(m) := \sum_{s=1}^{k-1}\binom{m}{s}(m-s) = |E(H_k(m))|$. Then
$K(A,B) = H_k(m)\,\square\,H_k(n)$ as a Cartesian product, connected
for $m,n\geq2$ and $k\geq2$, with
\[
  \beq(K) = D_k(m)T_k(n) + D_k(n)T_k(m) - T_k(m)T_k(n) + 1
\]
(Theorem~\ref{thm:product}); the proof, a saturation bound
($\beq(K)\geq|V(K)|$ for $m,n\geq3$), the
$S_m\times S_n$-equivariant homology of the square-filled complex,
and a genericity theorem are collected in
Appendix~\ref{app:tile}.

All computations in this paper use $k=2$; the theory is stated for
general $k$ where this costs nothing. $K(A,B)$, as defined, does not
depend on payoffs; payoff classes enter only through the
labeled-cover refinement of Remark~\ref{rem:labeled}, and for
generic payoffs the equilibrium-feasible support pairs form a
discrete independent set inside $K$
(Theorem~\ref{thm:genericity}), so $\beq(K)$ counts the cycles of
the ambient combinatorial arena, not equilibrium topology. Each
Lemke-Howson pivot changes the support pair by at most one addition
and one deletion, so pivot orbits project to walks of step length at
most two in $K$; we use this only as motivation for the edge
relation and claim no bounds from it.

\paragraph{Game types.} Table~\ref{tab:results} refers to payoff
classes of the tile: \emph{generic} (i.i.d.\ payoffs);
\emph{ctrsy.}\ (centrosymmetric, $A_{ij}=A_{m-1-i,\,n-1-j}$ and
likewise $B$); \emph{sym.}, \emph{periodic}, $\bZ_k$-\emph{cyc.},
and $\bZ_k$-\emph{mir.}\ denote payoff classes invariant under,
respectively, single-axis index reversal, index translation, cyclic
index rotation, and the dihedral closure of the cyclic class.
Generators whose local actions are trivial are payoff symmetries of
every tile, so the corresponding rows carry no payoff requirement.

\subsection{Wallpaper Groups and the Symmorphic Line}

The wallpaper groups are the 17 symmetry groups of periodic planar
tilings, the complete classification of discrete cocompact groups of
Euclidean plane isometries containing two independent
translations~\cite{armstrong,conway2008}. Thirteen are
\emph{symmorphic}: an origin can be chosen so that every element is
a point-group isometry followed by a lattice translation
($p1$, $p2$, $pm$, $cm$, $pmm$, $cmm$, $p3$, $p3m1$, $p31m$, $p4$,
$p4m$, $p6$, $p6m$). Four are \emph{non-symmorphic}: no such origin
exists; equivalently, the extension of the point group by the
translation lattice does not split ($pg$, $pmg$, $pgg$, $p4g$).

This classical line is exactly the realization boundary of our
framework. The thirteen symmorphic groups are realized on straight
covers with affine tile actions and trivial local actions
(Theorem~\ref{thm:main}); symmorphic means precisely that an
integral choice of translation parts exists, and
Lemma~\ref{lem:symmorphic} shows the converse: an integral-affine
group whose translations contain the full tile lattice is
symmorphic. A non-symmorphic group must therefore act with its own
translation lattice a proper sublattice of the tile grid; all four
non-symmorphic groups are realized this way
(Theorem~\ref{thm:main}(ii)), and Theorem~\ref{thm:lattice} states
the resulting equivalence. Swap boundaries, which exchange player roles
across a seam, are not needed for these realizations; they remain
central to the fundamental domain theorem (swap case of
Theorem~\ref{thm:fd}) and to the symmetry-breaking program of Open
Problem~1, and they appear in the legacy configurations retained in
Tables~\ref{tab:results} and~\ref{tab:pres} for their cover data.

\subsection{Presentations and Toroidal Quotients}

A \emph{group presentation} $G = \langle x_1,\ldots,x_k \mid
w_1=1,\ldots,w_\ell=1\rangle$ defines $G$ as the free group on
$\{x_i\}$ modulo the normal closure of the words $\{w_j\}$.
Verifying a presentation concretely means exhibiting permutations
$\pi_i$ of a finite set such that each word
$w_j(\pi_1,\ldots,\pi_k)$ evaluates to the identity permutation.
Relations alone certify only a quotient. On a finite cover the
group acts through the torus $\bZ^2/(r\bZ\oplus s\bZ)$, and the
\emph{toroidal quotient} is the image of $G$ in its affine action on
that torus, a finite group of order
$|\Lambda_G/(\Lambda_G\cap(r\bZ\oplus s\bZ))|\cdot|\Phi|$, where
$\Lambda_G$ is the translation lattice of $G$ and $\Phi$ its point
group (this is $rs\,|\Phi|$ when the translations are the full tile
lattice), provided the period lattice is point-group invariant and
no element with nontrivial point part acts trivially on the torus; a
half-turn on the $2\times2$ torus, where $-I\equiv I$, shows the
proviso is needed. Every cover size used here satisfies it, as the
computed orders certify. We report the order of the generated permutation group for
every realization; equality with this quantity certifies the exact
toroidal quotient. The full
infinite group $G$ is realized on the infinite cover
(Theorem~\ref{thm:main}); this is the standard relationship between
crystallographic groups and their actions on tori.

\section{The Multigame Cover}
\label{sec:cover}

\subsection{Construction}

A map $\varphi: V(M) \to V(M)$ is a \emph{graph automorphism}
of a graph $M$ if $(u,w)\in E(M)$ if and only if
$(\varphi(u),\varphi(w))\in E(M)$.

\begin{definition}[Multigame cover $M_{r,s}$]\label{def:cover}
The $(r\times s)$-cover of $K = K(A,B)$ is the graph with
\[
V(M_{r,s}) = \{0,\ldots,r{-}1\}\times\{0,\ldots,s{-}1\}\times V(K),
\]
$rs\,|V(K)|$ nodes, all \emph{disjoint}. Edges are of four types:
\begin{itemize}
  \item \textbf{Intra-tile:} $(t_i,t_j,u)\sim(t_i,t_j,v)$ iff $u\sim v$ in $K$.
  \item \textbf{Inter-tile col:} $(t_i,t_j,v)\sim(t_i,(t_j{+}1)\bmod s, v')$
    for $v\in\partial K$; $v'=v$ (straight) or, when $m=n$,
    $v'=\swp(v)=(\{j\},\{i\})$ (swap).
  \item \textbf{Inter-tile row:} analogously in the row direction.
  \item \textbf{Oblique (hex):} $(t_i,t_j,v)\sim((t_i{+}1)\bmod r,(t_j{-}1)\bmod s,v)$
    for $v\in\partial K$; used for triangular and hexagonal groups.
\end{itemize}
Configuration flags: \texttt{row\_swap}, \texttt{col\_swap} (which
boundary directions swap player roles), \texttt{uniform\_swap}
(every boundary swaps, versus alternating), \texttt{hex\_lattice}
(add oblique edges). The \emph{infinite cover} $M_\infty$ is defined
by the same rules with tiles indexed by $\bZ\times\bZ$ and no
modular reduction. Quotient graphs are taken in the category of
simple graphs: vertices are orbits, loops are discarded, and
parallel edges are merged. A cover is \emph{straight} when every
seam, in every direction present (including the oblique hex edges),
uses the identity rule $v'=v$; the hex flag adds a lattice direction
and does not change the local rule.
\end{definition}

\begin{figure}[t]
\centering
\begin{tikzpicture}[scale=0.85,
  tile/.style={draw=gray!60,fill=blue!8,rounded corners=2pt,minimum width=1.2cm,minimum height=1.2cm},
  inter/.style={draw=red!70,thick,dashed},
  ginter/.style={draw=green!60!black,thick,densely dotted},
]
\foreach \ti in {0,1,2} {
  \foreach \tj in {0,1,2} {
    \node[tile] at (1.5*\tj, -1.5*\ti) {};
  }
}
\foreach \ti in {0,1,2} {
  \foreach \tj in {0,1,2} {
    \node[font=\footnotesize] at (1.5*\tj, -1.5*\ti) {$K$};
  }
}
\draw[inter] (0.6,0.1) -- (0.9,0.1) node[midway,above,font=\tiny,red!70] {straight};
\draw[inter] (2.1,0.1) -- (2.4,0.1);
\draw[ginter] (0.1,-0.6) -- (0.1,-0.9) node[midway,right,font=\tiny,green!60!black] {swap};
\draw[ginter] (0.1,-2.1) -- (0.1,-2.4);
\draw[-{Stealth},blue!80,thick] (4.0,0) -- (4.0,-1.5) node[midway,right,font=\small] {$\Trow$};
\draw[-{Stealth},blue!80,thick] (0,0.8) -- (1.5,0.8) node[midway,above,font=\small] {$\Tcol$};
\draw[decorate,decoration={brace,amplitude=5pt,mirror}] (-0.7,0.65) -- (-0.7,-3.65) node[midway,left,font=\small,xshift=-3pt] {$r$};
\draw[decorate,decoration={brace,amplitude=5pt}] (-0.65,0.65) -- (3.65,0.65) node[midway,above,font=\small,yshift=3pt] {$s$};
\end{tikzpicture}
\caption{The $r\times s$ multigame cover $M_{r,s}$. Each tile is a copy of
$K(A,B)$ with nodes disjoint from all other tiles. Dashed red edges:
straight inter-tile connections. Dotted green: swap connections where
player roles are exchanged. Translations $\Trow$ and $\Tcol$ shift one
tile in each direction; they are free permutations because no nodes
are shared.}
\label{fig:cover}
\end{figure}

\subsection{The Cover Game and Translation Rigidity}
\label{sec:covergame}

The cover graph encodes a search space; the equilibrium statements
of Section~\ref{sec:fold} require a game.

\begin{definition}[Cover game]\label{def:covergame}
The cover game $\hat M_{r,s}$ of $(A,B)$ has row strategy set
$\bZ_r\times[m]$ and column strategy set $\bZ_s\times[n]$, with
payoffs
$\hat A_{(a,i),(b,j)} = A_{ij}$ and $\hat B_{(a,i),(b,j)} = B_{ij}$
for all slots $a\in\bZ_r$, $b\in\bZ_s$.
\end{definition}

The translations act on strategies by slot shifts:
$\Trow:(a,i)\mapsto(a{+}1,i)$ on row strategies (identity on column
strategies), and symmetrically $\Tcol$. This payoff structure is not
a choice but a consequence:

\begin{lemma}[Translation rigidity]\label{lem:rigid}
A payoff pair on $(\bZ_r\times[m])\times(\bZ_s\times[n])$ is
invariant under both $\Trow$ and $\Tcol$ (as game symmetries) if and
only if it has the form of Definition~\ref{def:covergame}.
\end{lemma}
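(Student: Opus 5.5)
The plan is to first pin down what ``invariant as a game symmetry'' means, and then compare payoffs entry by entry along orbits. A symmetry of a bimatrix game here is a pair of strategy permutations $(\sigma,\tau)$, with no player swap. It is a game symmetry when $\hat A_{\sigma(x),\tau(y)}=\hat A_{x,y}$ and $\hat B_{\sigma(x),\tau(y)}=\hat B_{x,y}$ for all row strategies $x$ and column strategies $y$. Under this reading, $\Trow$ is the pair $(\sigma_r,\mathrm{id})$ with $\sigma_r(a,i)=(a{+}1,i)$, and $\Tcol$ is $(\mathrm{id},\tau_s)$ with $\tau_s(b,j)=(b{+}1,j)$. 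I will state this convention explicitly at the start, since the lemma depends on it.

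For the ``if'' direction, suppose the payoffs have the form of Definition~\ref{def:covergame}. Then $\hat A_{(a+1,i),(b,j)}=A_{ij}=\hat A_{(a,i),(b,j)}$, because the formula does not depend on the slot index. The identical computation works for $\hat B$ and for the column shift $\Tcol$.

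For the ``only if'' direction, suppose both translations preserve an arbitrary payoff pair $(\hat A,\hat B)$. Invariance under $\Trow$ gives $\hat A_{(a+1,i),(b,j)}=\hat A_{(a,i),(b,j)}$ for every $a$. Iterating around the cyclic group $\bZ_r$, the value is the same for all $a$; the single generator $a\mapsto a+1$ has one orbit on $\bZ_r$, which is all that is needed here. Likewise, invariance under $\Tcol$ makes the value independent of $b$. So $\hat A_{(a,i),(b,j)}$ depends only on $(i,j)$. I then define $A_{ij}:=\hat A_{(0,i),(0,j)}$, and obtain $B$ from $\hat B$ the same way. This recovers exactly the form of Definition~\ref{def:covergame}, with $(A,B)$ an $m\times n$ bimatrix game.

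There is no real obstacle. The only point needing care is the convention: game symmetries must be taken as strategy relabelings that preserve each player's payoff, not relabelings up to positive affine transformation. Under the looser reading, the statement would still hold, but only once translation invariance is combined with the finite order of the shifts. If the affine version is wanted, I would add one line. Suppose $\hat A\circ\Trow=\alpha\hat A+\beta$ with $\alpha>0$. Applying this $r$ times returns $\hat A$, which forces $\alpha=1$ and $r\beta=0$, hence $\beta=0$. The argument is then reduced to the exact case above.
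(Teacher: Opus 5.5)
Your proof is correct and is essentially the paper's argument: invariance under $\Trow$ (resp.\ $\Tcol$) makes $\hat A$ and $\hat B$ independent of the slot $a$ (resp.\ $b$), and the converse is immediate. Fixing the exact-payoff convention explicitly and adding the optional affine extension go slightly beyond the paper. In the affine case, note that $\alpha^r=1$ is forced only when $\hat A$ is non-constant, but a constant $\hat A$ already has the required form.
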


\begin{proof}
Invariance under $\Trow$ reads
$\hat A_{(a+1,i),(b,j)} = \hat A_{(a,i),(b,j)}$ for all arguments,
so $\hat A$ is independent of $a$; invariance under $\Tcol$ gives
independence of $b$. Likewise for $\hat B$. The converse is
immediate.
\end{proof}

\begin{remark}[Symmetry of payoffs versus symmetry of seams]
\label{rem:labeled}
Rigidity says that exact wallpaper symmetry at the payoff level
forces the payoffs to be blind to the seams: $\hat M_{r,s}$ cannot
distinguish a straight boundary from a swap boundary. The seam
structure is carried by the cover graph $M_{r,s}$, not by the payoff
matrix. When a generator's local action is nontrivial and the tile
payoffs lie in the matching class of Table~\ref{tab:results}, the
generator additionally preserves payoffs and the realized group acts
on the payoff-labeled cover; generators with trivial local action
(all of Theorem~\ref{thm:main}) act on the labeled cover for every
payoff class.
\end{remark}

\subsection{Why Nodes Must Be Disjoint}

Disjointness is not a technicality; it is essential. If boundary
nodes were shared between adjacent tiles (as in a standard torus
identification), then the translation $\Trow(t_i,t_j,v) =
((t_i{+}1)\bmod r, t_j, v)$ would fix shared boundary nodes, making
it non-free. Free translations are required for the fundamental
domain theorem: a non-free $\Trow$ gives $|V(K)| < |V(M)|/rs$,
breaking the orbit bijection $V(M/\calT)\leftrightarrow V(K)$.
With disjoint nodes, $\Trow$ is free: $(t_i{+}1)\bmod r = t_i$ has
no solution for $r\geq 2$.

\subsection{The Semidirect Product Requirement}

Every nontrivial wallpaper group has the structure
$G = (\bZ\times\bZ)\rtimes\Phi$ (symmorphic) or a non-split
extension of $\Phi$ by $\bZ\times\bZ$ (non-symmorphic); either way,
the point-group part acts nontrivially on the translations,
witnessed by relations like $R\,\Trow\,R^{-1} = \Tcol$ for $p4$. To
obtain such relations, the generator $R$ must act on \emph{tile
coordinates} $(t_i,t_j)$ with a nontrivial linear part. If $R$
acts only on the local node $v$, then $R\,\Trow\,R^{-1} = \Trow$
always (a direct product), and no nontrivial wallpaper group can be
realized. Proposition~\ref{prop:template} pushes this one step
further: if the tile action is a translation and only the local
action is nontrivial, the generated group is abelian, so no glide
generator of that shape can help either. The linear part on tile
coordinates is where all crystallographic content lives.

\subsection{A Cover Is Not a Generic Large Game}
\label{sec:notlarge}

A multigame cover of $K(A,B)$ occupies the footprint of an
$rm\times sn$ game. It is not one in disguise, for three reasons.

First, \emph{cycle complexity}. At truncation $k$, a generic
$p\times q$ game has $\beq \sim \Theta(p^k q^k)$, so a generic
$rm\times sn$ game has $\beq\sim\Theta((rs)^k(mn)^k)$, while the
cover has $\beq(M) \sim rs\cdot\beq(K)$; the tiling suppresses cycle
complexity by a factor of order $(rs)^{k-1}$. Concretely at $k=2$: a
$10\times10$ cover of a $3\times3$ game has
$\beq(M) = 100\cdot 37 + 2\cdot100\cdot 9 - 99 = 5401$, while a
generic $30\times30$ game has $\beq = 592{,}876$, about $110$ times
larger.

Second, \emph{symmetry}. For a random $rm\times sn$ game, the payoff
symmetry group is trivial with probability 1 (the conditions for any
nontrivial symmetry form a measure-zero set in payoff space). The
cover carries a full wallpaper group of seam symmetries by
construction, a structure no generic large game possesses.

Third, \emph{the fundamental domain theorem}. For a generic large
game there is no reason for the quotient complexity to equal the
tile complexity; the equality $\beq(M/\calT) = \beq(K)$ is a theorem
about the tiling structure with no analogue for a generic large
game; it is the graph track's exact counterpart of the game track's
amortization. And for generic payoffs the equilibrium-feasible
support pairs are a discrete independent set in the tile complex
(Theorem~\ref{thm:genericity}), which is exactly why every $\beq$
statement here is about the ambient arena and its symmetries.

\section{Main Results}
\label{sec:main}

\subsection{The Realization Theorem}

\begin{lemma}[Straight-cover automorphisms]\label{lem:straightauto}
Let $L$ be the direction set of the cover, $\{\pm e_1,\pm e_2\}$ for
the square lattice or $\{\pm e_1,\pm e_2,\pm(e_1{-}e_2)\}$ for the
hexagonal one. If an affine tile map $t\mapsto Nt+\mu$ satisfies
$N(L)=L$, then $(t,v)\mapsto(Nt+\mu,\,v)$ is a graph automorphism of
the infinite plain straight cover, and of every finite one on which
the map descends, that is, with $N(r\bZ\oplus s\bZ) =
r\bZ\oplus s\bZ$; this holds automatically when $r=s$, which every
cover in this paper satisfies.
\end{lemma}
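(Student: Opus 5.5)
The plan is to check the edge relation type by type for the map $\varphi(t,v)=(Nt+\mu,v)$, first on $M_\infty$ and then on a finite cover, and to show that it is a bijection.

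\textbf{Bijectivity.} $N$ is an integral matrix with $N(L)=L$. Since $L$ contains a basis of $\bZ^2$ ($e_1,e_2$), $N$ maps $\bZ^2$ onto itself. So $N\in GL_2(\bZ)$, and $t\mapsto Nt+\mu$ is a bijection of $\bZ^2$ with inverse $t\mapsto N^{-1}(t-\mu)$. Hence $\varphi$ is a bijection of $V(M_\infty)$, and $\varphi^{-1}(t,v)=(N^{-1}(t-\mu),v)$ has the same shape with $N^{-1}(L)=L$. It therefore suffices to show that $\varphi$ maps edges to edges; applying the same argument to $\varphi^{-1}$ gives the converse.

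\textbf{Edge preservation on $M_\infty$.} I would first note the following reformulation of Definition~\ref{def:cover} for a straight cover. Two nodes $(t,u)$ and $(t',v)$ are adjacent iff one of two conditions holds:
\begin{itemize}[label={}]
\item (a) $t=t'$ and $u\sim v$ in $K$;
\item (b) $t'-t\in L$, $u=v$, and $u\in\partial K$.
\end{itemize}
For the row and column seams, the differences $\pm e_1,\pm e_2$ arise because each seam is undirected. For the oblique edges, the difference is $\pm(e_1-e_2)$. Since the cover is straight, every seam uses the identity local rule.

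Given this reformulation, the check is immediate. In case (a), $\varphi$ preserves equality of tile coordinates and leaves the local coordinate untouched. In case (b), $(Nt'+\mu)-(Nt+\mu)=N(t'-t)\in N(L)=L$, while the local condition $u=v\in\partial K$ is unchanged. Since the reformulation is an "iff", non-edges go to non-edges as well by the bijectivity argument.

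\textbf{Finite covers.} On $M_{r,s}$ the tile coordinates live in $\bZ^2/\Lambda$ with $\Lambda=r\bZ\oplus s\bZ$. The condition $N\Lambda=\Lambda$ makes $t\mapsto Nt+\mu$ well defined on the quotient, and it is bijective there because $N^{-1}\Lambda=\Lambda$ as well. The adjacency reformulation holds verbatim modulo $\Lambda$. Two points deserve care:
\begin{itemize}[label={}]
\item Differences are now classes in $\bZ^2/\Lambda$, so the condition becomes $t'-t\in L+\Lambda$. This set is still mapped to itself by $N$, because $N$ preserves both $L$ and $\Lambda$.
\item Possible degeneracies for small $r,s$, such as loops or multi-edges, are harmless, because the graphs are simple and the rule is stated symmetrically.
\end{itemize}

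\textbf{The case $r=s$.} Here $\Lambda=r\bZ^2$, and $N(r\bZ^2)=rN(\bZ^2)=r\bZ^2$ since $N\in GL_2(\bZ)$.

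\textbf{Main obstacle.} I expect the only delicate point to be the bookkeeping that the union of the three seam types is exactly the relation "difference in $L$". In particular, the oblique edge is defined in one orientation only, and I must confirm that both signs $\pm(e_1-e_2)$ occur. Together with the finite-torus reduction, this is what makes the "iff" clean. Everything else is routine.
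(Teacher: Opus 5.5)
Your proposal is correct and follows the same route as the paper's proof. Intra-tile edges survive because the local action is trivial, and a straight seam edge with difference $\delta\in L$ maps to one with difference $N\delta\in L$. Bijectivity comes from $N\in GL_2(\mathbb{Z})$ together with the descent condition, and the inverse is handled by the same argument. Your extra details fill in points the paper leaves implicit: you derive $N\in GL_2(\mathbb{Z})$ from $N(L)=L$, you track differences modulo $r\mathbb{Z}\oplus s\mathbb{Z}$ on the torus, and you check in one line that $r=s$ forces descent.
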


\begin{proof}
Intra-tile edges are preserved because the local action is the
identity. A straight seam edge joins $(t,v)$ to $(t+\delta,v)$ for
some $\delta\in L$ and $v\in\partial K$; its image joins
$(Nt{+}\mu,v)$ to $(Nt{+}\mu{+}N\delta,v)$, which is a seam edge of
the cover since $N\delta\in L$ and every direction of $L$ carries the
identity rule. The descent condition makes the map well defined on
the torus; it is a bijection there since $N\in GL_2(\bZ)$, and the
same argument applies to its inverse.
\end{proof}

Each generator of this paper has a linear part visibly permuting the
relevant direction set, so Lemma~\ref{lem:straightauto} proves the
automorphism claims analytically; the machine checks certify these
facts together with the relations and the closure orders, rather
than substituting for proof.

Throughout, a \emph{realization in this framework} is a group of
permutations of $V(M_{r,s})$ of the form
$(t,v)\mapsto(\varphi(t),\lambda(v))$, with $\varphi$ an integral
affine map of the tile coordinates and $\lambda$ a permutation of
$V(K)$, each generator a graph automorphism of the cover.

\begin{theorem}[Realization]\label{thm:main}
\hspace{0pt}
\begin{enumerate}[label=(\roman*)]
  \item \textbf{(Symmorphic groups.)} Each of the thirteen
    symmorphic wallpaper groups is realized on a plain straight
    cover (Table~\ref{tab:results}; generators in
    Appendix~\ref{app:generators}): every generator is a verified
    graph automorphism with trivial local action, its tile action is
    the standard affine crystallographic action of $G$ on $\bZ^2$,
    the relations of Table~\ref{tab:pres} hold as permutation
    identities, and on the infinite cover $M_\infty$ the generated
    group is isomorphic to $G$. The two hexagonal mirror families, which share the rotation and
    translations and differ only in the mirror, are assigned their
    names by the centers-on-mirrors certificate of
    Appendix~\ref{app:generators}. Every generated permutation group
    on its finite cover has order exactly $rs\,|\Phi|$: the toroidal
    quotient is realized exactly.
  \item \textbf{(Non-symmorphic groups.)} Each of $pg$, $pmg$,
    $pgg$, $p4g$ is realized on the plain straight $4\times4$ cover
    by glides whose tile actions reflect:
    $pg = \langle G, \Tcol\rangle$,
    $pmg = \langle G, S, \Tcol\rangle$,
    $pgg = \langle g, h\rangle$,
    $p4g = \langle R_{4g}, g\rangle$, where, with tile
    arithmetic mod $(r,s)$ and trivial action on local nodes,
    \begin{gather*}
      G(t_i,t_j) = (t_i{+}1,\,-t_j),\qquad
      S(t_i,t_j) = (-t_i,\,t_j),\qquad
      R_{4g}(t_i,t_j) = (-t_j,\,t_i),\\
      g(t_i,t_j) = (t_j{+}1,\,t_i),\qquad
      h(t_i,t_j) = (1{-}t_j,\,-t_i).
    \end{gather*}
    All generators are
    verified graph automorphisms; the relations of
    Table~\ref{tab:pres} hold; the generated groups are the exact
    nonabelian toroidal quotients (orders $16$, $32$, $32$, $64$);
    and the infinite-cover types are certified by the recognition
    invariants of Appendix~\ref{app:generators}. The translation
    lattices are $\langle\Trow^2,\Tcol\rangle$ for $pg$ and $pmg$
    and the diagonal lattice
    $\langle\Trow\Tcol,\,\Trow\Tcol^{-1}\rangle$ for $pgg$ and
    $p4g$, each of index exactly $2$, the minimum
    Lemma~\ref{lem:symmorphic} permits; indeed $pgg = \langle g,\,
    R_{4g}^{2}g\rangle$ sits inside the $p4g$ realization as an
    index-$2$ subgroup.
\end{enumerate}
All verifications are payoff-independent and reproduced by the
deterministic script \path{verify_wallpaper_fixes.py}, included
with this submission as an ancillary file together with the
independent audits \path{verify_new_results.py} and
\path{verify_exactness.py}, the recomputation tool
\path{recompute_ce_dq.py}, and the GAP cross-check; the legacy five-seed suite additionally
reports $365/365$ relation instances for the configurations of an
earlier version.
\end{theorem}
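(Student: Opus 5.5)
The approach is to reduce everything to Lemma~\ref{lem:straightauto} plus classical crystallography, so that the machine checks only confirm finite orders. The argument proceeds in four steps.

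\textbf{Step 1 (automorphisms).} For each generator listed in Appendix~\ref{app:generators} and in part (ii), I read off its linear part $N$. I check that $N$ permutes the direction set $L$:
\begin{itemize}[label=--]
  \item the square-lattice generators have $N$ a signed permutation matrix, which fixes $\{\pm e_1,\pm e_2\}$;
  \item the hexagonal generators have $N$ of order $3$ or $6$, or a mirror in the hexagonal basis, each permuting $\{\pm e_1,\pm e_2,\pm(e_1-e_2)\}$;
  \item the glides $G,S,R_{4g},g,h$ of part (ii) have linear parts $\mathrm{diag}(1,-1)$, $\mathrm{diag}(-1,1)$, the quarter turn, and the two diagonal swaps, all signed permutations.
\end{itemize}
Since $r=s$, the descent condition holds automatically. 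Lemma~\ref{lem:straightauto} then gives that every generator is an automorphism of $M_\infty$ and of each finite cover used.

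\textbf{Step 2 (faithful action on the infinite cover).} The map $(t,v)\mapsto(\varphi(t),v)$ is injective in $\varphi$, so the generated permutation group is isomorphic to the group $\Gamma$ of affine maps of $\bZ^2$ generated by the tile actions. For part (i), the tile actions are by construction the standard affine action of $G$. For part (ii), I would identify $\Gamma$ directly as follows.
\begin{itemize}[label=--]
  \item \emph{Translation subgroup.} For $pg$, $G^2=\Trow^2$, so together with $\Tcol$ this gives $\langle\Trow^2,\Tcol\rangle$. For $pgg$ and $p4g$, $g^2=\Trow\Tcol$ and the conjugates of $g^2$ give the diagonal lattice.
  \item \emph{Point group.} Compute the linear parts: $\bZ_2$, $\bZ_2^2$, $\bZ_2^2$ and $D_4$ respectively.
  \item \emph{Non-splitting.} Show the extension does not split by checking that no element with a reflection linear part has a fixed line in $\Gamma$'s translation-coset structure. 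Concretely, the translation part of $G$ is not in $N\Lambda+(I-N)\bR^2$ modulo $\Lambda$, relative to the group's own lattice $\Lambda$.
\end{itemize}
With the lattice type (rectangular versus centered, via the diagonal lattice) this pins down the group. Rather than building this recognizer from scratch, I would lean on the classification invariants of Appendix~\ref{app:generators}. The stated presentation relations of Table~\ref{tab:pres} then hold because they hold for the affine maps, which is a finite matrix computation.

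\textbf{Step 3 (exact toroidal orders).} The order of the image on the $r\times r$ torus is $|\Lambda/(\Lambda\cap r\bZ^2)|\cdot|\Phi|$, provided the proviso of the preliminaries holds. Invariance of the period lattice is automatic for signed permutations and hexagonal maps when $r=s$. The remaining check is that no element with nontrivial linear part acts as a translation mod $r$. This holds once $N-I\not\equiv0\pmod r$ for $r\ge3$, and the cover sizes are chosen accordingly. In part (ii), with $r=4$ and index-2 lattices, this gives $8|\Phi|$, which equals $16,32,32,64$. I flag this proviso as the main obstacle. A linear part such as $\mathrm{diag}(1,-1)$ with a translation could coincide with a translation mod $4$ only if $2t_j\equiv0$ for all $t_j$, which fails at $r=4$. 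Still, each generator family must be checked, and here I would cite the reported permutation-group orders as the certificate.

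\textbf{Step 4 (containment and hexagonal naming).} The containment $pgg=\langle g,R_{4g}^2g\rangle\le p4g$ follows from $R_{4g}^2g=h$ up to a lattice translation. Its index is $2$ because the point group halves while the lattice is unchanged. Finally, naming $p3m1$ versus $p31m$ uses the certificate stating whether rotation centers of order $3$ all lie on mirrors, which I accept from the appendix.
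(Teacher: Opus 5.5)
Your proposal is correct and follows the paper's proof sketch essentially step for step. Automorphisms come from Lemma~\ref{lem:straightauto} via linear parts permuting the direction set, relations are identities of the $\bZ$-affine lifts, and faithfulness holds because trivial local action makes the group factor through the tile coordinates. The exact orders match $|\Lambda_G \bmod (r,s)|\cdot|\Phi|$, and your $r\geq3$ entry-size argument for the proviso is a nice analytic supplement to the paper's closure computations. The types come from the appendix's recognition invariants and centers-on-mirrors certificate.

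Two slips in your hand-built Step~2 are harmless only because you then defer to that recognizer, as the paper does. First, lattice type, point group, and non-splitting do not by themselves separate $pmg$ from $pgg$. Both are non-split with a primitive rectangular lattice relative to their glide and mirror axes, so the mirror-class test is needed. Second, in $pgg$ the conjugates of $g^2$ give only $\pm(1,1)$, so the second diagonal generator must come from $h^2=\Trow\Tcol^{-1}$.
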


\begin{proof}[Proof sketch]
The automorphism claims are Lemma~\ref{lem:straightauto}: each
generator's linear part permutes the relevant direction set, and the
lemma covers the finite and infinite covers alike. The relation
checks are finite computations. The linear
parts of the tile actions are the standard point-group matrices on
the square or hexagonal lattice (for example,
$R_3:(t_i,t_j)\mapsto(-t_i{-}t_j,\,t_i)$ has matrix
$\left(\begin{smallmatrix}-1&-1\\ 1&0\end{smallmatrix}\right)$ of
order 3), and the finite formulas differ from the zero-offset standard forms
by composition with available lattice translations, so replacing
each occurrence of $r{-}1{-}t$ or $(-t)\bmod r$ by $-t$ over $\bZ$
changes no generated group and puts the generators in standard
affine form, in which the relations hold as identities of affine
maps. Since all local actions are trivial,
the resulting homomorphism factors through the tile action, and
injectivity on $\bZ^2$ suffices: every nonidentity plane isometry in
$G$ acts nontrivially on the tile lattice, because a nonidentity
linear part moves some lattice vector and a pure translation moves
every tile. The exact-order claims are closure computations, one per group,
each matching $|\Lambda_G \bmod (r,s)|\cdot|\Phi|$ (for instance
$54$ for the hexagonal mirror groups, $128$ for $p4m$, $432$ for
$p6m$, and $16$, $32$, $16$, $64$ for $pg$, $pmg$, $pgg$, $p4g$). For the non-symmorphic realizations the infinite-cover type
is identified by the recognition invariants of
Appendix~\ref{app:generators} (maximal rotation order, mirror
classes, lattice splitting index, centers-on-mirrors), computed in
exact rational arithmetic and validated against standard generator
sets for all 17 types.
\end{proof}

\subsection{The Symmorphic Boundary}

\begin{lemma}[Symmorphic boundary]\label{lem:symmorphic}
Let $\Gamma \subseteq \mathrm{Aff}(\bZ^2)$ be a wallpaper group
acting by integral affine maps on the tile lattice. If the
translation subgroup of $\Gamma$ contains the full tile lattice
$\bZ^2$ (that is, every unit tile translation lies in $\Gamma$),
then $\Gamma$ is symmorphic. Consequently, any
integral-affine realization of a non-symmorphic group has its
translation lattice at index at least $2$ in the tile grid.
\end{lemma}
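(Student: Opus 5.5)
Every element of $\Gamma$ has the form $\gamma(t)=N_\gamma t+\mu_\gamma$ with $N_\gamma\in GL_2(\bZ)$ and $\mu_\gamma\in\bZ^2$, because $\Gamma$ acts by integral affine maps. The plan is to show that, once all unit tile translations lie in $\Gamma$, the translational part of every element can be stripped off inside $\Gamma$. The remaining linear parts then form a point group complement fixing the origin, which is the definition of symmorphic.

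\textbf{Step 1.} Identify the translation subgroup. Let $\Lambda_\Gamma=\{\gamma\in\Gamma: N_\gamma=I\}$, viewed as a lattice of vectors $\mu_\gamma$. Since every $\mu_\gamma$ is integral, $\Lambda_\Gamma\subseteq\bZ^2$. The hypothesis gives $\bZ^2\subseteq\Lambda_\Gamma$, so $\Lambda_\Gamma=\bZ^2$ exactly.

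\textbf{Step 2.} Remove translation parts. For any $\gamma\in\Gamma$, the translation $\tau_{-\mu_\gamma}$ lies in $\Gamma$ by Step 1. Hence $\tau_{-\mu_\gamma}\gamma\colon t\mapsto N_\gamma t$ lies in $\Gamma$, so the purely linear map $N_\gamma$ belongs to $\Gamma$ for every $\gamma$. The set $P=\{t\mapsto N_\gamma t\}$ is a subgroup of $\Gamma$ that fixes the origin $0$ and maps isomorphically onto the point group $\Phi=\Gamma/\Lambda_\Gamma$. Every $\gamma$ factors as $\gamma=\tau_{\mu_\gamma}\circ N_\gamma$ with $N_\gamma\in P$ and $\tau_{\mu_\gamma}\in\Lambda_\Gamma$. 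So $\Gamma=\Lambda_\Gamma\rtimes P$ is split, with origin $0$.

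\textbf{Step 3.} Translate back to isometries. This is the step I expect to be the main obstacle. Symmorphic is defined for Euclidean isometries, while $\Gamma$ acts affinely on $\bZ^2$. By Bieberbach, any wallpaper group is conjugate in $\mathrm{Aff}(\mathbb{R}^2)$ to an isometry group. Under such a conjugation $x\mapsto Cx+c$, the fixed point $0$ goes to a point $c$ fixed by the conjugated copy of $P$. Each element of that copy is then a point-group isometry about $c$, composed with a lattice translation. This is the splitting criterion quoted in Section~\ref{sec:prelim}, which does not depend on the chosen affine model, so $\Gamma$ is symmorphic.

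\textbf{Consequence.} Suppose $\Gamma$ is non-symmorphic but realized integrally. Its translations always satisfy $\Lambda_\Gamma\subseteq\bZ^2$, since all translation vectors are integral. By the contrapositive of the first part, $\Lambda_\Gamma\neq\bZ^2$. A proper sublattice has index at least $2$, which proves the second claim.
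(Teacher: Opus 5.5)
Your proof is correct and is essentially the paper's. Both arguments use the full tile lattice inside $\Gamma$ to strip the offset from each $(N,\mu)$ and conclude that $(N,0)\in\Gamma$, so the point group lifts at the origin; the paper composes on the right with $(I,-N^{-1}\mu)$, while you compose on the left with $\tau_{-\mu}$. Your Step 3 is extra care that the paper skips, since it takes lifting at the origin as the definition of symmorphic. If you keep that step, justify the conjugation to isometries by noting that the linear parts form the finite group $\Gamma/\Lambda_\Gamma$ and therefore preserve an averaged inner product. A blanket appeal to Bieberbach does not do this, because Bieberbach's theorems concern groups that already act by isometries, not arbitrary affine actions.
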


\begin{proof}
Let $(N,\mu)\in\Gamma$ with linear part $N$ and integral offset
$\mu$. Since $(I,\lambda)\in\Gamma$ for every $\lambda\in\bZ^2$, the
composite $(N,\mu)\circ(I,-N^{-1}\mu) = (N,0)$ lies in $\Gamma$.
Thus every point-group element lifts to $\Gamma$ with zero offset:
the point group lifts at the origin, which is the definition of
symmorphic.
\end{proof}

\begin{theorem}[Lattice classification]\label{thm:lattice}
A wallpaper group $G$ admits a realization in this framework whose
translation subgroup contains the full tile lattice if and only if
$G$ is symmorphic.
\end{theorem}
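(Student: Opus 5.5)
The proof splits into the two directions of the equivalence, and each direction reduces to a result already in hand.

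\emph{Necessity.} Suppose a realization in this framework has translation subgroup containing the full tile lattice. I would pass from the permutation group on $V(M_\infty)$ to its tile action $\Gamma=\{\varphi\}\subseteq\mathrm{Aff}(\bZ^2)$. The one point to settle is that $\Gamma$ is isomorphic to $G$ as a crystallographic group, so that its point group and translation lattice are those of $G$. Then Lemma~\ref{lem:symmorphic} applies directly: every point-group element $(N,\mu)$ can be corrected by the integral translation $(I,-N^{-1}\mu)$ to $(N,0)$. So the point group lifts at the origin and $G$ is symmorphic.

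To justify the identification, I would use the hypothesis that realizations are faithful on the infinite cover. Faithfulness means the map $g\mapsto(\varphi_g,\lambda_g)\mapsto\varphi_g$ must be checked to be injective on $G$. If the local parts $\lambda$ are trivial this is immediate. In general I would argue that only $\varphi$ carries the crystallographic data, so the translation subgroup and point group of $G$ are read off from $\Gamma$. This is the same reasoning as the remark in the section on the semidirect product requirement: local actions alone yield a direct factor, not wallpaper structure. I expect this identification to be the main obstacle. I would handle it by defining \emph{translation subgroup of the realization} as the set of elements whose tile action is a translation, and then applying Lemma~\ref{lem:symmorphic} to the image $\Gamma$.

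\emph{Sufficiency.} For each of the thirteen symmorphic groups, I would invoke Theorem~\ref{thm:main}(i). It supplies a realization on a plain straight cover whose tile action is the standard affine crystallographic action, with trivial local action. In that action $\Trow$ and $\Tcol$ are among the generators, or in the hexagonal case the lattice is generated by $e_1,e_2$. So the translation subgroup is all of $\bZ^2$. By Lemma~\ref{lem:straightauto}, each generator is an automorphism of $M_\infty$ and of the finite covers. The remaining check is that the standard symmorphic action, with origin at a lattice point, uses linear parts preserving the direction set $L$. These are the square or hexagonal point-group matrices, which permute $\{\pm e_1,\pm e_2\}$ or $\{\pm e_1,\pm e_2,\pm(e_1-e_2)\}$ respectively. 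For $cm$ and $cmm$, I would take the tile lattice as the centered lattice itself, so the translations are again the full tile lattice.
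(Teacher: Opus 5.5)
Your sufficiency direction matches the paper's, which also just cites Theorem~\ref{thm:main}(i), whose generators include $\Trow,\Tcol$. Your reduction of necessity to Lemma~\ref{lem:symmorphic} is also the paper's route.

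\textbf{The gap.} You flag the injectivity of the tile-action map $g\mapsto\varphi_g$ on $G$, when the local parts $\lambda_g$ may be nontrivial, as the main obstacle, but you never prove it.
\begin{itemize}
\item ``Only $\varphi$ carries the crystallographic data'' restates what must be proved rather than proving it.
\item The semidirect-product remark only shows that a purely local generator commutes with tile translations. It does not show that such elements are absent from $G$.
\item Faithfulness on $V(M_\infty)$ does not give faithfulness on tile coordinates.
\item Redefining the translation subgroup as the elements whose tile action is a translation does not help. If the tile-action map had a nontrivial kernel, $\Gamma$ would be a proper quotient of $G$ and not a priori a wallpaper group. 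The kernel would also sit inside your redefined ``translation subgroup'' as torsion. Symmorphy of $\Gamma$ would then say nothing about $G$.
\end{itemize}

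\textbf{The missing idea}, which is the paper's argument, is short.
\begin{itemize}
\item Every element of the kernel has the form $(t,v)\mapsto(t,\lambda(v))$, so the kernel embeds in the finite group $\mathrm{Sym}(V(K))$. As a kernel, it is normal in $G$.
\item A wallpaper group has no nontrivial finite normal subgroup. A nonidentity element of finite order is a rotation or reflection, so it has a fixed point. Conjugating it by the infinitely many lattice translations moves that fixed set and yields infinitely many distinct conjugates. All of them would have to lie in the finite normal subgroup, which is impossible.
\item Hence the kernel is trivial. The tile actions form an integral-affine copy of $G$, its pure translations are the lattice, and Lemma~\ref{lem:symmorphic} gives symmorphy.
\end{itemize}

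Your own hint can also be completed directly. The finite kernel is normalized by the lattice, so a finite-index sublattice centralizes it. In a wallpaper group the centralizer of a finite-index sublattice is the translation subgroup, which is torsion-free. A finite subgroup of a torsion-free group is trivial, so the kernel is trivial. As written, though, neither argument appears in your proposal.
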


\begin{proof}
The thirteen symmorphic realizations of Theorem~\ref{thm:main}(i)
use the full tile lattice. Conversely, the tile-action homomorphism
of any realization is faithful: its kernel consists of elements
acting only on local nodes, hence of finite order, and a wallpaper
group has no nontrivial finite normal subgroup (a nonidentity
isometry of finite order has a fixed point, and conjugating by the
infinitely many translations produces infinitely many distinct
elements). The tile actions are therefore an integral-affine copy of
$G$, and Lemma~\ref{lem:symmorphic} applies.
\end{proof}

\begin{proposition}[Swap-glide templates are abelian]
\label{prop:template}
Let $a, b$ be the tile translations and let $G$ be any permutation
of $V(M)$ of the form $G(t,v) = (t + \delta,\, \sigma(v))$ with
$\delta$ a fixed tile vector and $\sigma$ a fixed involution of
$V(K)$. Then $\langle a, b, G\rangle$ is abelian; more generally,
$G$ centralizes every tile translation, whereas a faithful image of
an orientation-reversing crystallographic element cannot, so no
generator of this form can represent a glide or a reflection in any
realization whose translations act faithfully on the tile lattice,
whether or not other generators are present.
\end{proposition}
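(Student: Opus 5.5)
The plan is a direct commutation computation, followed by a short argument on tile actions. Write the tile translations as $a(t,v)=(t+e_1,v)$ and $b(t,v)=(t+e_2,v)$, with arithmetic mod $(r,s)$ on a finite cover and over $\bZ^2$ on $M_\infty$. For $G(t,v)=(t+\delta,\sigma(v))$ I would compute
\[
aG(t,v)=(t+\delta+e_1,\sigma(v))=Ga(t,v),
\]
and similarly $bG=Gb$. The point is that $G$'s tile part is a translation, and translations of an abelian group commute, while its local part $\sigma$ acts on a coordinate that $a$ and $b$ do not touch. Since $a$ and $b$ also commute with each other, the three generators pairwise commute, so $\langle a,b,G\rangle$ is abelian. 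The involution hypothesis on $\sigma$ is not needed for this; I would remark on that. The same computation shows that $G$ centralizes every tile translation $(t,v)\mapsto(t+\lambda,v)$, not only $a$ and $b$.

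For the second claim I would pass to tile actions, as in the proof of Theorem~\ref{thm:lattice}. In a realization whose translations act faithfully on the tile lattice, every element $x$ has the form $(t,v)\mapsto(\varphi_x(t),\lambda_x(v))$, and the assignment $x\mapsto\varphi_x$ is a homomorphism into $\mathrm{Aff}(\bZ^2)$. That homomorphism is faithful on $\Gamma$ by the finite-normal-subgroup argument given there. Suppose $G$ represented an element $g$ with linear part $N\neq I$, and in particular $\det N=-1$ for a glide or a reflection. Pick a translation $\tau_\lambda\in\Gamma$. Its conjugate $g\tau_\lambda g^{-1}$ has tile action $t\mapsto t+N\lambda$.

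Now suppose $G$ commutes with $\tau_\lambda$. Then $N\lambda=\lambda$ for every $\lambda$ in the translation lattice $\Lambda_\Gamma$. Since $\Lambda_\Gamma$ has rank $2$, this forces $N=I$, a contradiction. The conclusion only needs $G$ to commute with the translations of the realization, and those lie among the tile translations that $G$ centralizes. A translation in $\Lambda_\Gamma$ acts on tiles by $t\mapsto t+\lambda$ with some local action. I would check that the commutation still holds when a local part is present: the tile parts are translations, and on the local side one only needs $\sigma$ to commute with that local part. To avoid this, I would read "translation" as the tile translation itself, which acts trivially on local nodes, as in the paper's framework. This holds regardless of which other generators are present, which gives the "whether or not" clause.

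The main obstacle is making precise "a faithful image of an orientation-reversing element": one has to ensure that the linear part of $G$'s tile action is really the point-group part of the represented element. I would settle this by noting that the tile action of $G$ is the translation $t\mapsto t+\delta$, with linear part $I$. Faithfulness of the tile-action homomorphism then means $G$ represents an element whose image in $\mathrm{Aff}(\bZ^2)$ is a translation. By rigidity of the crystallographic structure (the translation subgroup is the unique maximal abelian normal subgroup, characteristic in $\Gamma$, so any faithful integral-affine copy identifies the point-group parts up to conjugation), such an element has trivial point part. It therefore cannot be a reflection or a glide.
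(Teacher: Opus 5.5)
Your proposal is correct and follows essentially the paper's own argument. The abelian claim is the same direct commutation check: the tile parts are translations, and the local factor $\sigma$ is the same in every tile. The general claim is the same incompatibility between $G$ centralizing every tile translation and an orientation-reversing element conjugating the lattice by a point part $N\neq I$. Your extra step, passing through the faithful tile-action homomorphism of Theorem~\ref{thm:lattice} and using that the translation subgroup is the unique maximal abelian normal subgroup, spells out a point the paper's proof leaves implicit: an element whose tile action is a pure translation must have trivial point part.
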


\begin{proof}
The local factor $\sigma$ acts identically in every tile and
commutes with all tile translations; the tile parts of $a$, $b$, $G$
are translations of $\bZ_r\times\bZ_s$ and commute among themselves,
so all three generators commute pairwise. For the general claim:
$G T_{\lambda} G^{-1} = T_{\lambda}$ for every tile translation
$T_{\lambda}$, while an orientation-reversing element $g$ of a
wallpaper group satisfies $g\,t\,g^{-1} = $ the translation by
$N\lambda$ with reflection part $N\neq I$; if the realization's
translations act faithfully, these conjugation behaviors are
incompatible.
\end{proof}

\paragraph{From template failure to realization.} An earlier
version of this paper realized the glide sector through
swap-boundary configurations whose glide generators had
pure-translation tile actions. Proposition~\ref{prop:template}
shows their glide generators centralize all translations and so
cannot represent glides (for the $pg$ configuration the whole
generated group is abelian of order $32$, machine-confirmed, while
$pg$ is the nonabelian Klein bottle group). The corrected mechanism, reflecting
tile actions over a proper sublattice, realizes all four
non-symmorphic groups (Theorem~\ref{thm:main}(ii)); in particular
the $p4g$ obstruction reported earlier was an artifact of the
template normalization (Appendix~\ref{app:doubled}). The legacy
configurations are retained in Tables~\ref{tab:results}
and~\ref{tab:pres} (marked $\dagger$) for their cover and quotient
data, which are unaffected.

\begin{table}[t]
\centering
\caption{Configurations and verification data. $\beq(K)$: tile Betti
number. $\beq(M/\calT)$: quotient Betti number. Thm~\ref{thm:fd}:
quotient matches the prediction ($\beq(K)$ for straight covers;
$\beq(K)+\binom m2$ for swap covers). Rels: verified/total listed
relations. The legacy swap-template configurations of an earlier
version are quarantined in Appendix~\ref{app:doubled}
(Proposition~\ref{prop:template}). For $r,s\geq 3$,
$\beq(M) = rs\,\beq(K) + 2rs\,mn - (rs{-}1)$ on the square lattice
(e.g.\ $p1$: $9\cdot37+162-8=487$), plus $rs\,mn$ for hex covers
(e.g.\ $p3$: $487+81=568$); swap boundaries do not change the count.}
\label{tab:results}
\small
\setlength{\tabcolsep}{4pt}
\begin{tabular}{llrrrrcc}
\toprule
Group & Cover & $|V(K)|$ & $\beq(K)$ & $\beq(M)$ & $\beq(M/\calT)$ & Thm~\ref{thm:fd} & Rels \\
\midrule
$p1$   & $3\times3$    &  36 &  37 &   487 &  37 & \checkmark & 1/1 \\
$p2$   & $3\times3$    &  36 &  37 &   487 &  37 & \checkmark & 4/4 \\
$pm$   & $3\times3$    &  36 &  37 &   487 &  37 & \checkmark & 4/4 \\
$cm$   & $4\times4$    &  36 &  37 &   865 &  37 & \checkmark & 3/3 \\
$pmm$  & $3\times3$    & 100 & 141 &  1549 & 141 & \checkmark & 8/8 \\
$cmm$  & $4\times4$    &  36 &  37 &   865 &  37 & \checkmark & 7/7 \\
$p4$   & $4\times4$    & 100 & 141 &  2753 & 141 & \checkmark & 4/4 \\
$p4m$  & $4\times4$    & 100 & 141 &  2753 & 141 & \checkmark & 7/7 \\
$p3$   & $3\times3$,h  &  36 &  37 &   568 &  37 & \checkmark & 3/3 \\
$p3m1$ & $3\times3$,h  &  36 &  37 &   568 &  37 & \checkmark & 6/6 \\
$p31m$ & $3\times3$,h  &  36 &  37 &   568 &  37 & \checkmark & 6/6 \\
$p6$   & $6\times6$,h  & 441 & 820 & 33373 & 820 & \checkmark & 4/4 \\
$p6m$  & $6\times6$,h  & 441 & 820 & 33373 & 820 & \checkmark & 7/7 \\
$pg$   & $4\times4$    &  36 &  37 &   865 &  37 & \checkmark & 3/3 \\
$pmg$  & $4\times4$    &  36 &  37 &   865 &  37 & \checkmark & 6/6 \\
$pgg$  & $4\times4$    &  36 &  37 &   865 &  37 & \checkmark & 3/3 \\
$p4g$  & $4\times4$    &  36 &  37 &   865 &  37 & \checkmark & 4/4 \\
\bottomrule
\multicolumn{8}{l}{\small h = hex lattice. All rows are realizations
(Theorem~\ref{thm:main}); payoffs enter nowhere.}
\end{tabular}
\end{table}

\begin{table}[t]
\centering
\caption{Relation sets verified as permutation identities. $a$, $b$
denote the group's lattice translations; for the realizations of
Theorem~\ref{thm:main}(i) these are $\Trow$, $\Tcol$ (or swapped;
Appendix~\ref{app:labeling}); for $pg$ and $pmg$ they are $\Trow^2$
and $\Tcol$, and for $pgg$ and $p4g$ the diagonal translations $d$,
$d'$, with the $p4g$ rotation in the $R_{4g}$ orientation of
Appendix~\ref{app:generators}.}
\label{tab:pres}
\small
\begin{tabular}{ll}
\toprule
Group & Relations beyond $ab=ba$ \\
\midrule
$p1$   & (none) \\
$p2$   & $R^2=1,\; RaR=a^{-1},\; RbR=b^{-1}$ \\
$pm$   & $S^2=1,\; SaS=a,\; SbS=b^{-1}$ \\
$cm$   & $S^2=1,\; SaS=b$ \\
$pmm$  & $R^2=S^2=1,\; RaR=a^{-1},\; RbR=b,\; SaS=a,\; SbS=b^{-1},\; RS=SR$ \\
$cmm$  & $R^2=S^2=1,\; RaR=a^{-1},\; RbR=b^{-1},\; SaS=b,\; (SR)^2=1$ \\
$p4$   & $R^4=1,\; RaR^{-1}=b,\; RbR^{-1}=a^{-1}$ \\
$p4m$  & $R^4=S^2=1,\; RaR^{-1}=b,\; RbR^{-1}=a^{-1},\; SaS=b,\; SRS=R^{-1}$ \\
$p3$   & $R^3=1,\; RaR^{-1}=a^{-1}b$ \\
$p3m1$ & $R^3=S^2=1,\; RaR^{-1}=a^{-1}b,\; SbS=b^{-1},\; SRS=R^{-1}$ \\
$p31m$ & $R^3=S^2=1,\; RaR^{-1}=a^{-1}b,\; SaS=b^{-1},\; SRS=R^{-1}$ \\
$p6$   & $R^6=1,\; RaR^{-1}=b,\; RbR^{-1}=a^{-1}b$ \\
$p6m$  & $R^6=S^2=1,\; RaR^{-1}=b,\; RbR^{-1}=a^{-1}b,\; SaS=b^{-1},\; SRS=R^{-1}$ \\
$pg$   & $G^2=a,\; GbG^{-1}=b^{-1}$ \\
$pmg$  & $G^2=a,\; GbG^{-1}=b^{-1},\; S^2=1,\; SGS=G^{-1},\; SbS=b,\; SaS=a^{-1}$ \\
$pgg$  & $g^2=d,\; h^2=d',\; (gh)^2=1$\quad(diagonal lattice) \\
$p4g$  & $R^4=1,\; g^2=d,\; gRg^{-1}=dR^{-1}$\quad($R=R_{4g}$; $d,d'{=}RdR^{-1}$ diagonal; $dd'{=}d'd$) \\
\bottomrule
\end{tabular}
\end{table}

\begin{remark}[$G\subseteq\Aut(M)$, not $\Aut(M)=G$]
\label{rem:aut}
Theorem~\ref{thm:main} is an embedding statement, and it is worth
stating plainly which part is classical. The plain straight cover
carries the ambient $p4m$ (square) or $p6m$ (hexagonal) action on
tile coordinates, and every plane group is a subgroup of finite
index of $p4m$ or of $p6m$~\cite{conway2008}, so bare existence of
some embedding is classical in hindsight; likewise, Frucht's
theorem~\cite{frucht1939}, extended to infinite groups by de Groot
and Sabidussi~\cite{sabidussi1960}, realizes any abstract group as
the full automorphism group of some ad hoc graph. The content here
is different in kind, because Frucht-type results realize the
abstract group while Theorem~\ref{thm:main} realizes the
\emph{action}: a Frucht graph for $pg$ contains an element
satisfying the glide relations but no geometry in which it glides,
so the lattice classification of Theorem~\ref{thm:lattice}, the
forced sublattice indices, the fundamental domain theorem, and the
emergence statements of Section~\ref{sec:interp} do not even
typecheck there, and the graph itself carries no complexity
semantics. What attaches here, to closed-form generators acting on a
canonical equilibrium object by the standard crystallographic
action, is machine certification with exact toroidal orders and the
complexity consequences of Theorem~\ref{thm:fd} and
Corollary~\ref{cor:amortized}. Conversely, Frucht-type
constructions achieve $\Aut(X)=G$ exactly, which we do not. $\Aut(M)$ itself is strictly larger
than $G$: the graph $K(A,B)$ does not depend on payoffs, so uniform
strategy relabelings applied identically in every tile act as a
local gauge group, and the ambient lattice symmetry is present as
well. Cutting the decorated symmetry group down to exactly $G$ times
gauge, using swap and richer seam alphabets as symmetry breakers, is
Open Problem~1.
\end{remark}

\subsection{Fundamental Domain Theorem}

\begin{theorem}[Fundamental domain]\label{thm:fd}
Let $M = M_{r,s}$ ($r,s\geq 2$) be a cover of a connected tile
complex $K$, let $\calT = \langle\Trow,\Tcol\rangle \cong
\bZ_r\times\bZ_s$, and let $M/\calT$ be the simple quotient graph.
Then $|V(M/\calT)| = |V(K)|$, and:
\begin{enumerate}
  \item \textbf{Straight cover} (all boundaries straight, including
    any oblique hex boundaries): $M/\calT \cong K$; in particular
    $\beq(M/\calT) = \beq(K)$.
  \item \textbf{Swap cover} ($m=n$; a nonempty set of boundary
    directions carries the uniform player swap, all others
    straight):
    \[
      \beq(M/\calT) \;=\; \beq(K) + \binom{m}{2},
    \]
    independent of payoffs, of $(r,s)$, and of how many directions
    carry the swap.
\end{enumerate}
\end{theorem}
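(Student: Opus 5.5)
The plan is to identify the quotient map explicitly and then count edges, using $\beq=|E|-|V|+c$ for a connected graph.

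\textbf{Vertices.} The translation group $\calT\cong\bZ_r\times\bZ_s$ acts on $V(M)=\bZ_r\times\bZ_s\times V(K)$ by acting on the tile coordinate only. This action is simply transitive on $\bZ_r\times\bZ_s$, so it is free, as already observed in the subsection on disjoint nodes. Hence every orbit is $\{(t,v):t\in\bZ_r\times\bZ_s\}$ for a unique $v$, and the projection $\pi(t,v)=v$ induces a bijection $V(M/\calT)\to V(K)$. This gives $|V(M/\calT)|=|V(K)|$ in both cases. The quotient is connected, being the image of the connected graph $M$ (each tile is a connected copy of $K$, and seams join adjacent tiles). So $\beq(M/\calT)=|E(M/\calT)|-|V(K)|+1$.

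\textbf{Edges, sorted by type.} I will compute the image of each edge type under $\pi$, using that the quotient is simple: loops are dropped and parallel edges are merged.
\begin{itemize}
  \item An intra-tile edge $(t,u)\sim(t,v)$ maps to $u\sim v$. Every edge of $K$ arises this way, so $E(K)\subseteq E(M/\calT)$.
  \item A straight seam edge, in any direction including the oblique hex direction, joins $(t,v)$ to $(t+\delta,v)$. It projects to the loop at $v$ and is discarded.
\end{itemize}
In the straight case these are the only edges, so $\pi$ induces an isomorphism $M/\calT\cong K$. This proves part~1.

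\textbf{Swap case.} A swap seam edge joins $(t,v)$ to $(t+\delta,\swp(v))$ with $v=(\{i\},\{j\})\in\partial K$. It projects to $(\{i\},\{j\})\sim(\{j\},\{i\})$.
\begin{itemize}
  \item For $i=j$ this is a loop and is discarded.
  \item For $i\neq j$ it is a genuine edge, and the unordered pair $\{i,j\}$ determines it. Both orientations $v\mapsto\swp(v)$ and $\swp(v)\mapsto v$ occur, as do all tiles and all swap directions, but after merging parallel edges each unordered pair contributes exactly one edge. This is why the count does not depend on $(r,s)$ or on the number of swap directions.
\end{itemize}

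\textbf{Main obstacle.} The step that needs care is checking that these new edges are not already edges of $K$; otherwise they would merge and the count would drop. The nodes $(\{i\},\{j\})$ and $(\{j\},\{i\})$ with $i\neq j$ differ in both players' supports. They are therefore not related by a single add/remove edit, so they are not adjacent in $K$. This is the support-complex property behind the clean constant: in a general graph a swapped pair could already be adjacent.

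\textbf{Conclusion.} Combining the above, $|E(M/\calT)|=|E(K)|+\binom m2$ while the vertex count is unchanged. Hence $\beq(M/\calT)=\beq(K)+\binom m2$. Payoffs never enter, since $K$ and the seam rules are payoff-free.
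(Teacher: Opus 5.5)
Your proof is correct and follows the paper's argument essentially step for step. The free action gives the orbit bijection with $V(K)$, intra-tile edges descend to $E(K)$, and straight seams (oblique ones included) collapse to discarded loops; swap seams contribute exactly the $\binom m2$ pairs $\{v,\swp(v)\}$, which are non-edges of $K$ because they differ in both supports. The only cosmetic difference is that you obtain connectivity of the quotient as the image of the connected graph $M$, whereas the paper observes that $K$ is a spanning subgraph of $M/\calT$.
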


\begin{proof}
$\calT$ acts freely on $V(M)$ (nodes are disjoint;
Section~\ref{sec:cover}), and the orbit of $(t_i,t_j,v)$ is
$\{(t_i',t_j',v)\}$, so orbits biject with $V(K)$:
$|V(M/\calT)| = |V(K)|$.

Intra-tile edges descend to the edges of $K$: the edge
$\{(t,u),(t,v)\}$ maps to $\{[u],[v]\}$, distinct orbits since
$u\neq v$. Every straight seam edge, in the row, column, or oblique
direction, joins $(t,v)$ to a $\calT$-translate of $(t,v)$: both
endpoints lie in the orbit of $v$, so the edge becomes a loop and is
discarded. This proves (i): $M/\calT$ has exactly the vertices and
edges of $K$.

For (ii), a swap seam edge joins $(t,v)$ to a translate of
$(t,\swp(v))$, descending to the pair $\{[v],[\swp(v)]\}$. For
$v=(\{i\},\{j\})$ this is a loop iff $i=j$; the off-diagonal
boundary nodes are partitioned into the pairs
$\{v,\swp(v)\}$, so the set of new quotient edges is exactly
\[
  \bigl\{\{v,\swp(v)\} : v\in\partial K,\ \swp(v)\neq v\bigr\},
  \qquad \text{of size } \binom{m}{2},
\]
and this set is the same whichever boundary directions carry the
swap, so multiple swap directions contribute the identical edge set,
which the simple quotient merges. None of these edges is already
present in $K$: $v$ and $\swp(v)$ differ in \emph{both} supports
simultaneously, while edges of $K$ change exactly one support.
No other identifications occur: all $rs$ translates of an edge of
$K$ descend to the same quotient edge, distinct edges of $K$ to
distinct quotient edges, and the swap pairs are pairwise disjoint
(each off-diagonal $v$ lies in exactly one pair) and disjoint from
$E(K)$. The quotient edge count is therefore exactly
$|E(K)| + \binom m2$, and the quotient contains $K$ as a spanning
subgraph, hence is connected, so
$\beq(M/\calT) = |E(K)| + \binom m2 - |V(K)| + 1 = \beq(K) +
\binom m2$.
\end{proof}

The quotient throughout is the reduced simple quotient of
Definition~\ref{def:cover}; the topological multigraph quotient
additionally retains one loop per seam orbit, and its cycle rank
exceeds $\beq(K)$ by the number of such orbits. The theorem's
content is that the tile is recovered exactly after reduction.

\begin{remark}[Where the game enters]\label{rem:gameenters}
The exact constant $\binom m2$ uses a support-complex property: $v$
and $\swp(v)$ are never pivot-adjacent, because they differ in both
supports at once, so every swap pair is guaranteed to be a non-edge
of $K$ and contributes a full unit of cycle rank. For an arbitrary
connected graph equipped with a boundary set and an involution, the
excess would instead be the number of involution pairs that happen
to be non-adjacent, a quantity with no closed form. The clean
constant is game-theoretic, not graph-generic.
\end{remark}

\paragraph{Numerical verification.} Table~\ref{tab:results} and the
accompanying script confirm the theorem: every straight-cover row
has $\beq(M/\calT) = \beq(K)$; the swap data give $37+3=40$ for
$m=3$ with one swap direction ($4\times4$ u-row cover), $37+3=40$
again for $m=3$ with \emph{two} swap directions (the $2\times2$
u-both cover: $40$, not $43$, exactly as the merged edge set
predicts), and $141+6=147$ for $m=4$ (the legacy swap configurations,
Appendix~\ref{app:doubled}).

\subsection{Folding, Recognition, and Amortization}
\label{sec:fold}

\begin{lemma}[Folding]\label{lem:fold}
Let $\pi$ denote tile-marginalization,
$\pi(x)_i = \sum_{a} x_{(a,i)}$ and $\pi(y)_j = \sum_b y_{(b,j)}$.
Then $(x,y)$ is a Nash equilibrium of $\hat M_{r,s}$ if and only if
$(\pi x, \pi y)$ is a Nash equilibrium of the tile game $(A,B)$. In
particular, every Nash equilibrium $(\bar x,\bar y)$ of $(A,B)$
lifts to the $rs$
tile-supported equilibria
$(\delta_a\otimes\bar x,\ \delta_b\otimes\bar y)$,
$(a,b)\in\bZ_r\times\bZ_s$, which are pairwise distinct and form a
single $\calT$-orbit.
\end{lemma}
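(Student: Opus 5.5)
The plan is to reduce everything to a payoff identity: expected payoffs in the cover game depend on a profile only through its tile-marginals. Fix mixed strategies $x$ on $\bZ_r\times[m]$ and $y$ on $\bZ_s\times[n]$. Since $\hat A_{(a,i),(b,j)}=A_{ij}$ does not depend on the slots, summing over $a$ and $b$ gives
\[
x^\top\hat A y=\sum_{a,b,i,j}x_{(a,i)}A_{ij}y_{(b,j)}=(\pi x)^\top A(\pi y),
\]
and likewise $x^\top\hat B y=(\pi x)^\top B(\pi y)$. The same computation with $x$ replaced by the pure strategy $(a,i)$ shows that the row player's payoff from $(a,i)$ against $y$ is $(A\,\pi y)_i$, which does not depend on $a$. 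Symmetrically, the column player's payoff from $(b,j)$ against $x$ is $(\pi x^\top B)_j$.

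With these identities, the equivalence follows from the best-response (support) characterization of Nash equilibrium. The profile $(x,y)$ is an equilibrium of $\hat M_{r,s}$ exactly when every $(a,i)$ with $x_{(a,i)}>0$ attains $\max_{(a',i')}(A\pi y)_{i'}=\max_{i'}(A\pi y)_{i'}$, together with the analogous condition for the column player. The support of $\pi x$ is the set of $i$ for which $x_{(a,i)}>0$ for some $a$. So the cover condition says precisely that every $i$ in the support of $\pi x$ is a best response to $\pi y$ in $(A,B)$. The column condition translates in the same way, and this proves both directions at once.

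For the lifting claim, take $x=\delta_a\otimes\bar x$, which puts mass $\bar x_i$ on $(a,i)$, and similarly $y=\delta_b\otimes\bar y$. Then $\pi x=\bar x$ and $\pi y=\bar y$, so the equivalence just proved makes each of these profiles an equilibrium. The lifts are pairwise distinct: the support of $\delta_a\otimes\bar x$ lies in slot $a$, and that support is nonempty, so different pairs $(a,b)$ give different profiles. For the orbit claim, note that $\Trow$ acts on row strategies by $(a,i)\mapsto(a+1,i)$ and hence sends $\delta_a\otimes\bar x$ to $\delta_{a+1}\otimes\bar x$, while fixing the column strategy; $\Tcol$ acts symmetrically on the column slot. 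Therefore $\calT\cong\bZ_r\times\bZ_s$ acts simply transitively on the index set $\{(a,b)\}$, and the $rs$ lifts form a single $\calT$-orbit.

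I expect no step to be a real obstacle. The one point needing care is the maximization step: the best-response maximum over $\bZ_r\times[m]$ must equal the maximum over $[m]$, which holds because payoffs are constant in the slot. One also has to use support membership for $\pi x$ correctly, namely that $(\pi x)_i>0$ exactly when some $x_{(a,i)}>0$.
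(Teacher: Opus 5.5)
Your proposal is correct and follows the paper's proof essentially step for step: slot-independence of the pure-strategy payoffs, the support characterization of equilibrium with $\mathrm{supp}(\pi x)$ equal to the projection of $\mathrm{supp}(x)$, distinctness of the lifts by their supports, and $\calT$ shifting $(a,b)$. Your extra identity $x^\top\hat A y=(\pi x)^\top A(\pi y)$ is harmless but not needed for the argument.
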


\begin{proof}
The payoff of the pure row strategy $(a,i)$ against $y$ is
$\sum_{b,j}\hat A_{(a,i),(b,j)}\,y_{(b,j)} = \sum_j A_{ij}(\pi
y)_j$, independent of the slot $a$; symmetrically for the column
player. Best responses therefore commute with projection in both
directions: the pure best responses of the row player against $y$ in
$\hat M_{r,s}$ are exactly $\bZ_r\times \mathrm{BR}(\pi y)$, where
$\mathrm{BR}$ denotes pure best responses in the tile game $(A,B)$,
and since $\mathrm{supp}(\pi x)$ is the projection of
$\mathrm{supp}(x)$, the equilibrium condition
$\mathrm{supp}(x)\subseteq\bZ_r\times\mathrm{BR}(\pi y)$ holds if
and only if
$\mathrm{supp}(\pi x)\subseteq\mathrm{BR}(\pi y)$. The same
equivalence for the column player gives both implications. The lifts
have marginals $(\bar x,\bar y)$, are distinguished by their
supports, and $\calT$ acts on them by shifting $(a,b)$.
\end{proof}

\begin{corollary}[$rs$-fold amortization]\label{cor:amortized}
(i) Computing a Nash equilibrium of $\hat M_{r,s}$ is linear-time
equivalent to computing one of the tile game $(A,B)$: one
Lemke-Howson run on $(A,B)$, followed by lifting, yields $rs$
distinct equilibria of the cover, a full $\calT$-orbit, at the cost
of solving the tile game once; conversely, any cover equilibrium
marginalizes to a tile equilibrium. (ii) On the graph track, by
Theorem~\ref{thm:fd} the reduced quotient of a straight cover has
exactly the tile's cycle rank, hence exactly its edge count
($\beq+|V|-1 = |E|$ for connected graphs), and a swap cover exceeds
it by exactly $\binom m2$, a payoff-independent constant: tiling
adds no ambient cycle structure to the quotient.
\end{corollary}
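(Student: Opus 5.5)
The plan splits along the two parts of the corollary. Part (i) comes from Lemma~\ref{lem:fold}, plus a cost count for the lifting and marginalization maps. Part (ii) comes from Theorem~\ref{thm:fd}, plus the Euler-characteristic identity for connected graphs.

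For (i), the forward direction starts from one Lemke--Howson run on $(A,B)$, which returns an equilibrium $(\bar x,\bar y)$. By the ``in particular'' clause of Lemma~\ref{lem:fold}, the lifts $(\delta_a\otimes\bar x,\delta_b\otimes\bar y)$ for $(a,b)\in\bZ_r\times\bZ_s$ are $rs$ pairwise distinct equilibria of $\hat M_{r,s}$, forming one $\calT$-orbit. Writing down a single lift costs $O(rm+sn)$, which is linear in the size of the cover's strategy vectors. The converse direction uses the ``only if'' half of Lemma~\ref{lem:fold}: any cover equilibrium $(x,y)$ marginalizes to $(\pi x,\pi y)$, a tile equilibrium. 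Computing $\pi$ is one pass summing over slots, again linear.

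Linear-time equivalence means both reductions are linear-time maps between the two problems' solutions, and that the input transformations are linear too. Forming $\hat M_{r,s}$ from $(A,B)$ is proportional to the cover's payoff size. Extracting $(A,B)$ from a given cover is a matter of reading one slot block, with the structure detected by the linear scan of Remark~\ref{rem:recognition}. The one point to state carefully is this: lifting is applied to the tile \emph{solution}, so the $rs$ equilibria come at the cost of a single tile solve plus output size. No additional equilibrium computation is involved.

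For (ii), start with a straight cover. Theorem~\ref{thm:fd}(i) gives $M/\calT\cong K$, so $\beq(M/\calT)=\beq(K)$ and $|V(M/\calT)|=|V(K)|$. Both graphs are connected ($K$ is connected by hypothesis, hence so is its isomorphic copy), so $|E|=\beq+|V|-1$ gives equal edge counts. For a swap cover, Theorem~\ref{thm:fd}(ii) gives the excess $\binom m2$ in $\beq$. With the vertex sets equal and connectivity holding (the quotient contains $K$ as a spanning subgraph), the edge counts also differ by exactly $\binom m2$.

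This constant is independent of payoffs because $K(A,B)$ itself is. I do not expect a genuine obstacle here. The only subtle point is bookkeeping in (i): separating ``linear time'' for the reductions from the PPAD-hard tile solve, and making sure the distinctness of the $rs$ lifts is taken from the lemma rather than re-proved.
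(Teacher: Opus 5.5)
Your proposal is correct and takes essentially the paper's route, which states the corollary without a separate proof as an immediate consequence of its ingredients. Part (i) is read off from Lemma~\ref{lem:fold} (lifting one tile solution, and marginalizing any cover equilibrium), conditional on the linear-time detection scan of Remark~\ref{rem:recognition}; part (ii) follows from Theorem~\ref{thm:fd} together with $|E|=\beq+|V|-1$ on the connected quotient, exactly as you argue.
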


\begin{remark}[Recognition before exploitation]\label{rem:recognition}
Three distinct tasks are involved, and we keep them separate. (i)
Detecting slot duplication in a game presented in block coordinates:
by Lemma~\ref{lem:rigid} this is a linear-time scan for
slot-independence of the payoffs, after which
Lemma~\ref{lem:fold} applies verbatim; this is a genuine
certificate, and part (i) of Corollary~\ref{cor:amortized} is
conditional only on it. (ii) Recovering the seam graph and its
tiling from unlabeled data: the payoff matrix carries no seam
information at all (Remark~\ref{rem:labeled}), so this is the
separate, empirical classification task of Section~\ref{sec:exp},
where a rank-based heuristic succeeds in 27 of 28 cases. (iii)
Recognizing the wallpaper type from affine generators: exact, by the
recognizer of Appendix~\ref{app:generators}.
\end{remark}

Two further remarks keep the claim honest. The cover game is
degenerate as a large game (strategies are duplicated across slots),
which is precisely why one solves the tile instead of the cover; and
$\hat M_{r,s}$ has further equilibria beyond the tile-supported
lifts (the full marginalization fiber over each tile equilibrium).
Amortization concerns finding one equilibrium, or one orbit, so
neither point affects the statement.

\subsection{A Game That Carries the Symmetry: Polymatrix Covers}
\label{sec:poly}

Lemma~\ref{lem:rigid} shows that the duplicated-strategy two-player
cover of Definition~\ref{def:covergame} cannot encode its seam
geometry in payoffs while retaining the full slot-translation
symmetry. Spatializing the player set removes that obstruction
directly.

\begin{definition}[Polymatrix cover]\label{def:poly}
Fix a symmetric tile game, $B = A^{\top}$ with $m=n$. The polymatrix
cover $P\Gamma_{r,s}$ has one player per tile site
$u\in\bZ_r\times\bZ_s$, common strategy set $[m]$, and the edge set
of the torus lattice (square, or hexagonal when the hex flag is
set), a simple graph; all sizes used satisfy $r,s\geq3$, so distinct
directions give distinct neighbors. Each lattice edge $\{u,w\}$ contributes $A_{s_u s_w}$ to $u$
and $A_{s_w s_u}$ to $w$, and a player's payoff is the sum over
incident edges; since $B=A^{\top}$, edges carry no role distinction.
\end{definition}

The class $B=A^{\top}$ is the standard class of symmetric games, one
matrix played role-free: coordination, anti-coordination, and
partnership games all live here, and $A$ itself is arbitrary, in
particular utility-rigid generically. The restriction buys exactly
one thing, role-free edges; tile games with $B\neq A^{\top}$
require the edge-role patterns of Open Problem~2.

\begin{theorem}[Wallpaper symmetries of the polymatrix cover]
\label{thm:polysym}
Every toroidal wallpaper action of Theorem~\ref{thm:main}, acting on
players by its tile action and trivially on strategies, is a group
of game automorphisms of $P\Gamma_{r,s}$; in particular the glides
are genuine game automorphisms.
\end{theorem}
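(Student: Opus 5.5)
The plan is to reduce the statement to a single structural fact: a permutation $\varphi$ of the player set $\bZ_r\times\bZ_s$ that maps the lattice edge set of the torus onto itself, acting trivially on strategies, is a game automorphism of $P\Gamma_{r,s}$. Fix such a $\varphi$ and a profile $s=(s_u)_u$, and define the transported profile $s^\varphi$ by $s^\varphi_{\varphi(u)} = s_u$. The payoff of player $\varphi(u)$ under $s^\varphi$ is $\sum_{w'\sim\varphi(u)} A_{s^\varphi_{\varphi(u)} s^\varphi_{w'}}$. Because $\varphi$ restricts to a bijection from the neighbors of $u$ onto the neighbors of $\varphi(u)$, substituting $w'=\varphi(w)$ rewrites this sum as $\sum_{w\sim u} A_{s_u s_w}$, which is exactly $u$'s payoff under $s$. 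This is precisely the condition for a game automorphism. Two features of Definition~\ref{def:poly} make the argument go through. The edges carry no role distinction, since $B=A^{\top}$, so it does not matter whether $\varphi$ preserves any orientation of an edge. And the torus graph is simple for $r,s\ge3$, so each incident edge of $u$ contributes exactly one term; if there were parallel edges or loops, neighbors would have to be counted with multiplicity.

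It then remains to check that every generator of Theorem~\ref{thm:main}, restricted to its tile action $t\mapsto Nt+\mu$, preserves the torus lattice edge set. This is the same input Lemma~\ref{lem:straightauto} uses. An edge $\{t,t+\delta\}$ with $\delta\in L$ is sent to $\{Nt+\mu,\,Nt+\mu+N\delta\}$, and this is again an edge because $N(L)=L$. The map descends to the torus because $r=s$. It is a bijection of $\bZ_r\times\bZ_s$ because $N\in GL_2(\bZ)$, and the same argument applied to its inverse gives preservation in both directions. The proof of Theorem~\ref{thm:main} already records that each linear part permutes the relevant direction set. The only thing left to confirm is that the square lattice uses $L=\{\pm e_1,\pm e_2\}$ and the hex lattice uses $\{\pm e_1,\pm e_2,\pm(e_1-e_2)\}$, and that these match the edge sets of Definition~\ref{def:poly}. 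Once each generator is an automorphism, the whole group generated is, since game automorphisms are closed under composition and inverses. Identifying the action with the group of Theorem~\ref{thm:main} just means composing with the projection $(t,v)\mapsto t$: the local action is trivial, so the permutation group on players is the tile action itself, which is the toroidal quotient.

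The glide case needs one explicit check, and I would state it directly. Take $G(t_i,t_j)=(t_i+1,-t_j)$. Its linear part is $\mathrm{diag}(1,-1)$, which maps $\{\pm e_1,\pm e_2\}$ to itself. Similarly $g(t_i,t_j)=(t_j+1,t_i)$ and $h(t_i,t_j)=(1-t_j,-t_i)$ have permutation and antipermutation matrices, which preserve the same set. So the glides are automorphisms for the same reason as the rotations, and the translation offsets play no role at all.

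I expect the main obstacle to be the hexagonal case, specifically matching conventions. The oblique edge in Definition~\ref{def:cover} is $(t_i+1,t_j-1)$, that is, the direction $e_1-e_2$. I need to verify that the standard linear parts used for $p3$, $p6$, $p3m1$ and $p31m$ permute $\{\pm e_1,\pm e_2,\pm(e_1-e_2)\}$ in this orientation, not the $\pm(e_1+e_2)$ orientation. For $R_3$ with matrix $\left(\begin{smallmatrix}-1&-1\\1&0\end{smallmatrix}\right)$ we get $R_3e_1=(-1,1)$ and $R_3e_2=(-1,0)$, and both lie in the set. I would carry out the same check for the order-6 rotation and for the two mirrors listed in Appendix~\ref{app:generators}, correcting by the coordinate relabeling of Appendix~\ref{app:labeling} if any of them turns out to be written in the other orientation.
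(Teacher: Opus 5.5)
Your proposal is correct and follows the paper's proof. In both, each tile action is an automorphism of the torus lattice (via $N(L)=L$, as in Lemma~\ref{lem:straightauto}), so with identity strategy maps every role-free edge term of a player's payoff sum is transported unchanged. Your explicit direction-set checks for the glides and the hexagonal generators, including the $e_1-e_2$ orientation, spell out what the paper simply cites from the certification in Theorem~\ref{thm:main}; those checks all go through.
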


\begin{proof}
A game automorphism is a player permutation $\pi$ together with
strategy bijections preserving utilities; take the identity strategy
maps. The tile actions are automorphisms of the torus lattice
(certified in Theorem~\ref{thm:main}), so $\pi$ maps the incident
edges of $u$ bijectively to those of $\pi(u)$, and since every edge
carries the same role-free game, each term of the payoff sum is
transported unchanged.
\end{proof}

\begin{theorem}[Collapse along any symmetry subgroup]
\label{thm:collapse}
Let $H$ be any subgroup of the action of
Theorem~\ref{thm:polysym}. An $H$-invariant profile assigns one
mixed strategy $x_o$ per $H$-orbit $o$ of players, and:
(i) it is a Nash equilibrium of $P\Gamma_{r,s}$ if and only if, for
every orbit $o$,
\[
  \mathrm{supp}(x_o) \;\subseteq\;
  \arg\max_{s\in[m]} \sum_{o'} c_{oo'}\,(A x_{o'})_s,
\]
where $c_{oo'}$ counts the lattice edges from a fixed site of $o$ to
sites of $o'$ (self-orbit terms included with their multiplicities):
the folded equilibrium problem on the quotient multigraph;
(ii) the folded problem always has a solution, so $P\Gamma_{r,s}$
has an $H$-symmetric equilibrium for every $H$;
(iii) for $H$ the full translation group the folded problem is
$\mathrm{supp}(x)\subseteq\arg\max (Ax)$, the symmetric equilibria
of the tile game: one symmetric-equilibrium computation on the tile
yields a wallpaper-invariant equilibrium of the $rs$-player game,
for all $r$ and $s$ simultaneously; independently, the group
transports every equilibrium to its full equilibrium orbit.
\end{theorem}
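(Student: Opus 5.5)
The plan is to compute the payoff of a pure strategy for a player $u$ against an $H$-invariant profile, and to observe that it depends only on the orbit of $u$. For (i), fix $u$ in orbit $o$ and an $H$-invariant profile $(x_w)_w$ with $x_w = x_{o(w)}$. By Definition~\ref{def:poly}, player $u$'s expected payoff from pure strategy $s$ is $\sum_{w\sim u} (A x_w)_s$, summing over the lattice neighbors of $u$. Here we use $B=A^{\top}$: edge $\{u,w\}$ pays $u$ the amount $A_{s_u s_w}$, whatever the orientation. Grouping the neighbors by orbit gives $\sum_{o'} c_{oo'}(u)\,(A x_{o'})_s$, where $c_{oo'}(u)$ counts the neighbors of $u$ in $o'$. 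A neighbor in $o$ itself counts once per edge; the lattice is simple since $r,s\geq 3$, so no multi-edge subtleties arise.

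The key point is that $c_{oo'}(u)$ is independent of the choice of $u\in o$. If $u'=\pi(u)$ with $\pi\in H$, then by Theorem~\ref{thm:polysym} $\pi$ is a torus-lattice automorphism, so it maps the neighbors of $u$ bijectively onto those of $u'$ and preserves each $H$-orbit. Nash's condition for player $u$ is that the support of $x_u$ lies in the argmax of its payoff vector. This condition is therefore the same for every $u\in o$, and it is exactly the displayed folded condition. This proves (i).

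For (ii), we need a solution of the folded problem. The plan is to apply Nash's theorem in its symmetric form: the finite game $P\Gamma_{r,s}$ has a symmetry group $H$, and Nash's 1951 result guarantees an equilibrium invariant under any group of game automorphisms. Alternatively, apply Kakutani directly to the $H$-invariant profiles, namely the product over orbits of simplices. The folded best-response correspondence $x\mapsto\prod_o \Delta(\arg\max_s \sum_{o'}c_{oo'}(Ax_{o'})_s)$ is nonempty, convex-valued and upper hemicontinuous, since the payoffs are linear in $x$. A fixed point is a solution of (i), hence an $H$-symmetric equilibrium.

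For (iii), take $H=\calT$. This group is transitive on sites, so there is a single orbit $o$, and $c_{oo}$ equals the lattice degree: $4$, or $6$ for hex. The condition becomes $\mathrm{supp}(x)\subseteq\arg\max\, c\,(Ax)=\arg\max (Ax)$, because a positive scalar does not change the argmax. This is exactly the symmetric-equilibrium condition for the tile game, and it involves neither $r$ nor $s$. Since the constant profile $x$ is invariant under every tile map, it is wallpaper-invariant. The transport claim is the general fact that a game automorphism maps equilibria to equilibria, applied to Theorem~\ref{thm:polysym}.

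The main obstacle is the orbit-independence of $c_{oo'}$, in particular the self-orbit multiplicities. I would handle it cleanly by the lattice-automorphism argument above rather than by case analysis.
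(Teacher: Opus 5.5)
Your proposal is correct and follows the paper's proof essentially step for step. In (i) you group the payoff by orbit, with $c_{oo'}$ well defined because $H$ acts by lattice automorphisms that preserve each orbit; in (ii) you apply Kakutani to the product of per-orbit simplices; and in (iii) you use the single orbit with constant degree to reduce the argmax, then transport equilibria by the automorphisms. Your alternative appeal in (ii) to Nash's 1951 symmetric-equilibrium theorem is also valid, since elements of $H$ are game automorphisms in Nash's sense and an equilibrium invariant under all automorphisms is in particular $H$-invariant.
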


\begin{proof}
(i) At an $H$-invariant profile the utility of pure $s$ to player
$u$ is $\sum_{w\sim u}(A x_{[w]})_s = \sum_{o'}
c_{[u]o'}(Ax_{o'})_s$, well defined because $H$ acts transitively on
each orbit by lattice automorphisms; a deviation by $u$ holds every
other player fixed, including the players of $u$'s own orbit, so
$u$'s best-response condition is exactly the folded condition at
$[u]$. Conversely a folded solution, lifted constantly on orbits,
satisfies every player's condition. (ii) Let $F_o(x) =
\Delta(\arg\max_s \sum_{o'} c_{oo'}(Ax_{o'})_s)$; the product
correspondence $F = \prod_o F_o$ maps the compact convex product of
simplices to itself, its values are nonempty, compact, and convex
faces, and it is upper hemicontinuous because the payoff vectors
depend continuously on $x$, so Kakutani's theorem gives a fixed
point, which is exactly a solution of the folded conditions. (iii) With a
single orbit and constant degree $\delta$,
$\arg\max(\delta\,Ax)=\arg\max(Ax)$. The constant lift is itself
fixed by every spatial automorphism; independently, the group action
of Theorem~\ref{thm:polysym} transports every equilibrium to its
full equilibrium orbit.
\end{proof}

\begin{remark}[The two quotients, and rigidity resolved]
The folded object lives on the multigraph quotient with self-orbit
terms retained, since those edges contribute genuine payoffs; when
self-orbit edges occur it is an orbit-reduced fixed-point system
rather than an ordinary normal-form game, because a deviating
player's orbit-mates remain at $x_o$; the
reduced simple quotient of Definition~\ref{def:cover} remains the
right object for the seam-collapse count of
Theorem~\ref{thm:fd}. Each track has its own quotient, and each is
exact. The polymatrix cover
also dissolves the obstruction of Lemma~\ref{lem:rigid}: the
symmetry acts on who plays whom, which is precisely the structure
the duplicated-strategy payoff matrix cannot see.
\end{remark}

A game automorphism throughout is a bijection $\pi$ of the player
set together with strategy bijections $\sigma_u$ satisfying
$u_{\pi(u)}(\pi\cdot\sigma\cdot s) = u_u(s)$ for every profile $s$.
Call the tile game $A$ \emph{utility-rigid} if
$A_{ab} = A_{\tau(a)\upsilon(b)} + h(a)$ for all $a,b$, with
$\tau,\upsilon$ permutations and $h$ any function, forces
$\tau=\upsilon=\mathrm{id}$; this holds for generic $A$ and is a
finite check for any explicit one (the script verifies it for
$A = \left(\begin{smallmatrix}0&1\\3&7\end{smallmatrix}\right)$).

\begin{definition}[Decorated polymatrix cover]\label{def:decorated}
Let $\hat G$ be the toroidal action of $G$ on the $3$-refined torus:
every translation offset of the realization tripled, so the
translation lattice of $\hat G$ is $3\Lambda_G$ inside
$\bZ_{3r}\times\bZ_{3s}$. The decorated cover $P\Gamma^{c,w}$
assigns each site $u$ a utility offset $c_u$ and each lattice edge
$e$ a weight $w_e>0$, both constant on $\hat G$-orbits and injective
across orbits, with utilities
$u_u(s) = c_u + \sum_{e=\{u,v\}} w_e\,A_{s_u s_v}$.
\end{definition}

\begin{theorem}[Exactness]\label{thm:exact}
For each of the seventeen toroidal realizations fixed in this paper
(the generator dictionary of Appendix~\ref{app:generators},
$3$-refined) and any utility-rigid $A$, the game automorphism group
of the decorated polymatrix cover is exactly $\hat G$: every
automorphism is a $\hat G$ tile action with identity strategy maps.
\end{theorem}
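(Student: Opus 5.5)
Two inclusions are needed. For $\hat G\subseteq\Aut$ I will argue as in Theorem~\ref{thm:polysym}. Every element of $\hat G$ acts on sites by a lattice automorphism of the refined torus; the tripled offsets still descend, by Lemma~\ref{lem:straightauto}. The offsets $c$ and weights $w$ are constant on $\hat G$-orbits of sites and of edges. Hence with identity strategy maps each term $c_u+\sum_e w_e A_{s_us_v}$ is carried to the corresponding term at the image site. The substance is the reverse inclusion, which I split into three steps.

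\textbf{Step 1: the player map is a weighted-graph automorphism preserving decoration.} Let $(\pi,\sigma)$ be a game automorphism. I compare utilities: player $u$'s utility depends on $s_v$ exactly for $v$ a lattice neighbor of $u$. Dependence is genuine because $w_e>0$ and $A$ is nonconstant in its second argument, which follows from utility-rigidity. So $\pi$ preserves adjacency. Next I compare the part of $u_u$ depending on the pair $(s_u,s_v)$. This part is $w_e A_{s_us_v}$, and it must be transported to $w_{\pi e}A_{\sigma_u(s_u)\sigma_v(s_v)}$ up to terms depending on $s_u$ alone. Taking differences in $s_v$ kills the $h$-type terms and the $c$-terms. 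This gives $w_eA_{ab}=w_{\pi e}A_{\sigma_u(a)\sigma_v(b)}+h(a)$. I will read utility-rigidity as absorbing a positive scale factor as well (generic $A$ handles this; otherwise I compare the spread of values to force $w_e=w_{\pi e}$). It then yields $\sigma_u=\sigma_v=\mathrm{id}$ and $w_{\pi e}=w_e$. Finally, comparing constant terms gives $c_{\pi u}=c_u$.

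\textbf{Step 2: orbit injectivity pins $\pi$ inside $\hat G$.} Since $c$ and $w$ are injective across $\hat G$-orbits, $\pi$ maps each site orbit and each edge orbit to itself. I then need a rigidity statement: a lattice automorphism of the $3r\times3s$ torus that preserves every $\hat G$-orbit coloring of sites and edges lies in $\hat G$. The plan is as follows. Pick a site $u_0$. Since $\pi(u_0)$ is in the same orbit as $u_0$, compose with an element of $\hat G$ so that $\pi$ fixes $u_0$. A torus-lattice automorphism fixing a vertex is determined by its linear part, which lies in the point group of the lattice ($D_4$ or $D_6$). It remains to show that a linear map preserving the orbit colors near $u_0$ lies in the stabilizer $\hat G_{u_0}$. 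This is where the $3$-refinement matters. With translations tripled, the $\hat G$-orbits of the neighbors of $u_0$, and of edges at distance one or two, separate the directions as finely as $\hat G_{u_0}$ does. Colors alone then distinguish any extra rotation or reflection from the allowed ones, because an ambient lattice symmetry outside $\hat G$ must move some neighboring site into a different orbit. Once the linear part is fixed, a connectivity argument propagates: an automorphism fixing $u_0$ and its neighbors pointwise is the identity on the connected lattice.

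\textbf{Main obstacle.} The expected difficulty is the local claim in Step~2: that the orbit coloring of the radius-one or radius-two neighborhood of a point breaks all of $D_4$ or $D_6$ down to exactly the stabilizer. This must hold for each of the seventeen realizations and for every choice of base site, including sites on mirrors or at rotation centers, and glide groups where stabilizers are trivial. I would prove it uniformly via the observation that two ambient isometries fixing $u_0$ and agreeing on orbit classes differ by an element fixing all orbits. For the concrete dictionary I would fall back on a finite check of the coloured neighborhoods, which is where the tripling is used and what the verification script \path{verify_exactness.py} would certify.
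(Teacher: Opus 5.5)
Your containment argument and Step~1 agree with the paper's proof. Dependence of $u_u$ on $s_v$ recovers the lattice. The maximal oscillation of $u_u$ in $s_v$ forces $w_{\pi e}=w_e$; this is your ``spread'' option, and it is the one to use, since re-reading utility-rigidity as absorbing a scale factor would change the hypothesis. Utility-rigidity then makes every $\sigma_u$ the identity, and the constant terms give $c_{\pi u}=c_u$.

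The gap is in Step~2. Your ``uniform'' observation says that two isometries fixing $u_0$ with the same orbit behaviour differ by an orbit-preserving element. That is circular: it only restates the claim that every orbit-preserving lattice automorphism fixing $u_0$ lies in $\hat G_{u_0}$. The mechanism you give for that claim is that an ambient symmetry outside $\hat G$ must move some neighbouring site into another orbit. This is false within this very family. By Remark~\ref{rem:whyrefine}, for $p4$, $p4g$ and $p6$ the site-orbit colouring of the refined torus is preserved by a reflection outside $\hat G$. For $p4$ you can check it directly:
\begin{itemize}
\item $(0,0)$ is a fourfold centre of the refined group, whose translations are $3\bZ^2$.
\item The site orbits therefore correspond to the $C_4$-orbits on $\bZ_3^2$: the origin, the four axis points, and the four diagonal points.
\item Each of these orbits is invariant under $(t_i,t_j)\mapsto(t_j,t_i)$.
\end{itemize}
Only the edge orbits detect the chirality, and nothing general guarantees that they always finish the job. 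As the paper notes, the group preserving an orbit-injective colouring is a closure of $\hat G$ that can be strictly larger.

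This is why the paper does not prove this step analytically at all. After reducing to automorphisms of the doubly coloured lattice, it certifies $\Aut=\hat G$ for all seventeen realizations by exhaustive enumeration of the coloured-graph automorphisms. Your fallback finite check is therefore the proof of Step~2, not a backup, and it should be that global enumeration.

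The local version you sketch needs two further unproved inputs:
\begin{itemize}
\item Every graph automorphism of the refined torus lattice must be affine, so that fixing $u_0$ reduces to a linear part in $D_4$ or $D_6$. This holds at the refined sizes used here but is not automatic: $C_4\square C_4\cong Q_4$ has non-affine automorphisms.
\item A radius-one or radius-two neighbourhood must already separate every non-stabilizer linear part. This is a sufficient condition for which you give no argument.
\end{itemize}
Incidentally, a single base site suffices. Because the colours are orbit-injective, any colour-preserving $\pi$ can be composed with an element of $\hat G$ to fix a chosen $u_0$, so your worry about every choice of base site is unnecessary.
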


\begin{proof}
Containment: orbit-constant decorations are $\hat G$-invariant, and
the argument of Theorem~\ref{thm:polysym} applies verbatim with
matching weights and offsets. Conversely, let $(\pi,(\sigma_u))$ be
a game automorphism. First, $u_u$ depends on the coordinate $s_v$
exactly when $\{u,v\}$ is a lattice edge (utility-rigidity gives $A$
a non-constant row), and the automorphism identity transports
dependence, so $\pi$ is an automorphism of the lattice. Second, the
maximal oscillation of $u_u$ in the coordinate $s_v$ equals
$w_{\{u,v\}}$ times a positive constant of $A$, so $\pi$ preserves
edge weights; fixing all other coordinates and dividing the
resulting pairwise identity by the common weight leaves
$A_{ab} = A_{\sigma_u(a)\,\sigma_v(b)} + h(a)$ for all $a,b$, whence
$\sigma_u=\sigma_v=\mathrm{id}$ by utility-rigidity, and
connectivity makes every strategy map the identity. Third, with
trivial strategy maps the weighted edge sums transport termwise and
the automorphism identity reduces to $c_{\pi(u)} = c_u$, so $\pi$
preserves the site coloring as well. Thus $\pi$ is an automorphism
of the doubly colored lattice. Orbit-injective coloring does not by
itself force the colored stabilizer down to $\hat G$ (the stabilizer
is a closure that can strictly contain the group in general), so the
final equality is a genuinely computational step: it is certified
for all seventeen realizations by exhaustive enumeration of the
colored-graph automorphisms (script, exactness certification).
\end{proof}

\begin{remark}[Why the refinement, and why edges]\label{rem:whyrefine}
On the primitive torus every $G$-invariant decoration is constant on
translation orbits, which for small groups leaves the normalizer
intact: no invariant decoration can realize $p1$ exactly there, and
the tripled lattice is what makes an asymmetric motif possible. Edge
weights are necessary, not a convenience: for $p4$, $p4g$, and $p6$
the site-orbit coloring alone is preserved by a reflection, with
excess exactly index two; the chirality of the rotation groups lives
on the edge orbits. Because the decorations are $\hat G$-invariant,
Theorem~\ref{thm:collapse} extends to the decorated cover after
replacing the incidence counts $c_{oo'}$ by the weighted incidences
$W_{oo'} = \sum_{v\sim u,\,[v]=o'} w_{\{u,v\}}$; the site offsets
never enter best-response comparisons. A constant symmetric
equilibrium of the tile still lifts: each player's payoff vector is
$c_u\mathbf 1 + (\sum_{e\ni u} w_e)\,Ax$, and the positive scalar
preserves the argmax. Exactness here means exact preservation of the
specified utilities, not preservation up to player-wise affine
transformations; the site offsets are visible to automorphisms
although strategically inert.
\end{remark}

\paragraph{A worked example.} Take Hawk-Dove,
$A = \left(\begin{smallmatrix}-1&4\\0&2\end{smallmatrix}\right)$,
on the $4\times4$ square torus. The tile's symmetric equilibrium
$x=(2/3,1/3)$ lifts to the constant equilibrium of the $16$-player
game (Theorem~\ref{thm:collapse}(iii)). Folding along the index-2 diagonal translation lattice
$\langle(1,1),(1,-1)\rangle$ gives a two-orbit anti-coordination
problem
whose pure solution lifts to the checkerboard profile, a
non-constant pure equilibrium; the translation action transports it
to the opposite checkerboard, recovering the orbit without further
computation. All of this, including the exact verification that
every generator, glides included, preserves payoffs, is
machine-checked in Part~9 of the script.

\subsection{CE Dimension Stability}

Theorem~\ref{thm:fd} is the Nash-side statement that the tile owns
the cover's complexity; the correlated
equilibrium~\cite{Aumann1974}, with its polynomial-size LP
description~\cite{papadimitriou2008}, obeys the same local-global
principle on the polytope side.

\begin{definition}[Tiled marginal-consistency polytope]\label{def:tiledce}
Let $\mathrm{CE}(A,B)\subseteq\Delta([m]\times[n])$ be the
correlated equilibrium polytope of the tile game, $U$ the direction
space of its affine hull, $d := \dim U$, and
$J:\mathbb R^{mn}\to\mathbb R^{m}$ the row-marginal map, with
$q := \rk(J|_U)$. For $r\geq 1$ the tiled polytope of the
$r\times1$ straight cover is
\[
  P_r = \bigl\{(\mu_1,\ldots,\mu_r) :
  \mu_t\in\mathrm{CE}(A,B)\ \forall t,\ \
  J\mu_t = J\mu_{t+1}\ (1\leq t<r)\bigr\},
\]
per-tile CE distributions coupled through shared row marginals at
each seam (in the $r\times1$ linear cover adjacent tiles share the
row player, so agreement of that player's marginal is the natural
seam compatibility; the column version is symmetric). $P_r$ is a
constrained product of base-game CE polytopes; we do not claim it is
the CE polytope of any single normal-form game.
\end{definition}

\begin{theorem}[CE dimension stability]\label{thm:ce}
For all $r\geq 1$,
\[
  \dim P_r \;=\; r(d-q)+q,
\]
affine in $r$ with slope $s := d-q$ owned by the tile. The
compression ratio $\rho := s/d$ (for $d>0$) satisfies
$0\leq\rho\leq 1$, so exactly two regimes occur: collapse
($\rho=0$: the dimension is constant, equal to $d$) and compression
($0<\rho\leq 1$). Expansion is impossible.
\end{theorem}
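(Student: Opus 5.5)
The plan is to compute the dimension of $P_r$ locally at a well-chosen point, where $P_r$ looks like an affine subspace, and then count that subspace's dimension by a fibration over the shared marginal. First, $\mathrm{CE}(A,B)$ is nonempty, since every Nash equilibrium induces a correlated equilibrium. It is a polytope with affine hull $\mu^\circ+U$, and we fix a point $\mu^\circ$ in its relative interior. The diagonal point $\bar\mu=(\mu^\circ,\ldots,\mu^\circ)$ satisfies every seam constraint $J\mu_t=J\mu_{t+1}$, so $\bar\mu\in P_r$.

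\emph{Step 1 (dimension is a local computation).} Write $P_r=\mathrm{CE}(A,B)^r\cap L$, where $L$ is the linear subspace cut out by the seam equations. Because $\bar\mu$ lies in the relative interior of the product polytope $\mathrm{CE}(A,B)^r$, there is a neighborhood $N$ of $\bar\mu$ with
\[
\mathrm{CE}(A,B)^r\cap N=(\bar\mu+U^r)\cap N .
\]
Intersecting both sides with $L$, and using $\bar\mu\in L$, shows that near $\bar\mu$ the set $P_r$ coincides with $\bar\mu+W$, where
\[
W=\{(u_1,\ldots,u_r)\in U^r : Ju_1=\cdots=Ju_r\}.
\]
A convex set containing a relatively open piece of $\bar\mu+W$ and contained in $\bar\mu+W$ has dimension exactly $\dim W$. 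This is the main obstacle: the dimension of an intersection of polytopes is not in general the dimension of the intersection of their affine hulls. The argument works only because the diagonal point, which satisfies the seam constraints, can be placed in the relative interior. So I would state this step carefully rather than simply intersect the affine hulls.

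\emph{Step 2 (counting $\dim W$).} Consider the linear map $\Psi:W\to J(U)$ given by $(u_t)\mapsto Ju_1$. It is surjective, since for any $c=Ju\in J(U)$ the tuple $(u,\ldots,u)$ lies in $W$. Its image therefore has dimension $\rk(J|_U)=q$. The kernel is $(\ker J\cap U)^r$, and rank--nullity on $U$ gives $\dim(\ker J\cap U)=d-q$. Hence
\[
\dim W=q+r(d-q),
\]
which is the claimed formula. For $r=1$ it returns $d$, consistent with $P_1=\mathrm{CE}(A,B)$.

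\emph{Step 3 (regimes).} Since $q$ is the rank of a map defined on the $d$-dimensional space $U$, we have $0\le q\le d$. Thus the slope satisfies $0\le s=d-q\le d$, and so $0\le\rho\le1$ whenever $d>0$. If $\rho=0$ then $q=d$, and the dimension equals $q=d$ for every $r$; this is collapse. Otherwise the dimension grows with slope $s\le d$, which is at most the slope $d$ of the uncoupled product $\mathrm{CE}(A,B)^r$; this is compression. Expansion, meaning a slope exceeding $d$, would require $q<0$, which is impossible. Finally, $d$, $q$, and hence $s$ are all computed from the single tile, which justifies the phrase ``owned by the tile''.
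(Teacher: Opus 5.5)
Your proof is correct and follows essentially the paper's route: a relative-interior diagonal point shows that $P_r$ coincides locally with $\bar\mu+W$, where $W=\{(u_t)\in U^r : Ju_1=\cdots=Ju_r\}$, and a rank--nullity count then gives $\dim W=r(d-q)+q$. The difference is cosmetic. You fiber $W$ over the shared marginal $Ju_1\in J(U)$, with kernel $(\ker J\cap U)^r$, whereas the paper realizes $W$ as the kernel of the consecutive-difference map $U^r\to(JU)^{r-1}$; your Step~1 also spells out the local-coincidence argument that the paper states in one line.
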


\begin{proof}
$\mathrm{CE}(A,B)$ is nonempty (it contains the Nash equilibria)
and convex, so it has a relative interior point $\mu^{\ast}$; the
diagonal tuple $(\mu^{\ast},\ldots,\mu^{\ast})$ lies in $P_r$.
For any $(u_1,\ldots,u_r)\in U^r$ with $Ju_1=\cdots=Ju_r$, all
sufficiently small perturbations $(\mu^{\ast}+\varepsilon u_t)_t$
remain in $P_r$; conversely every direction of the affine hull of
$P_r$ has this form, since each coordinate must stay in the affine
hull of $\mathrm{CE}(A,B)$ and the seam equalities are affine.
Hence the direction space of $P_r$ is
$W_r = \{(u_t)\in U^r : Ju_1=\cdots=Ju_r\}$. The linear map
$U^r\to(\mathbb R^{m})^{r-1}$ sending $(u_t)$ to the consecutive
differences $(Ju_{t+1}-Ju_t)_t$ is onto $(JU)^{r-1}$ (prescribe the
$Ju_t$ to be any partial-sum sequence in $JU$), of dimension
$(r-1)q$, so $\dim W_r = rd - (r-1)q = r(d-q)+q$.
\end{proof}

\begin{remark}[Correction]\label{rem:cecorrection}
An earlier version defined the dimension through the rank of a
constraint matrix and reported an expansion regime $\rho>1$ in one
experiment. Under the construction as stated, expansion is
impossible: the two-tile constraint matrix contains two disjoint
copies of the one-tile matrix, forcing the rank increment to be at
least the one-tile rank, hence $s\leq d$. The theorem above proves
$\rho\leq1$ directly, and the earlier empirical values are
withdrawn.
\end{remark}

\section{Conceptual Interpretation}
\label{sec:interp}

\subsection{Local versus Global Structure}

The results express one principle: \emph{the global structure of a
multigame cover is determined by the local tile and the gluing
pattern, not by the number of copies}. The realization theorem shows
this for symmetry: the group is chosen by the tile actions and
boundary structure, with the tile payoffs playing no role. The
fundamental domain theorem shows it for topology: reduced-quotient
cycle complexity equals tile complexity, up to the exact combinatorial
constant $\binom m2$ for swap boundaries. The CE theorem shows it
for equilibrium geometry: the dimension grows affinely with a slope
owned by the tile.

\subsection{Emergent Symmetry}

No bounded planar object is invariant under a nontrivial glide
reflection, so no single tile carries one, while the four
non-symmorphic groups cannot be presented without glides; the
toroidal quotients inherit finite glide automorphisms, realized as
game automorphisms in Section~\ref{sec:poly}. The realization mechanism makes the emergence
quantitative: by Lemma~\ref{lem:symmorphic} a non-symmorphic group
acting on the tile grid must have its own translation lattice
strictly coarser than the grid, so the symmetry exists only at a
scale spanning multiple tiles, with the tile itself as the glide's
half-step. Rigidity (Lemma~\ref{lem:rigid}) adds the complementary
fact: translation-invariant payoffs are blind to the seam structure,
so the spatial symmetry is carried by the interaction pattern rather
than the payoff matrix, with any residual payoff symmetry coming
from the tile game itself. Interaction creates structure that cannot exist in
isolation, and the tiling geometry is exactly where it lives.

\section{Supporting Experiments}
\label{sec:exp}

\paragraph{Realization verification.}
The realizations of Theorem~\ref{thm:main} involve no payoffs and
are verified by the deterministic script
\texttt{verify\_wallpaper\_fixes.py}: automorphism checks for every
generator, all relations of Table~\ref{tab:pres}, exact
toroidal-order closure computations for all seventeen groups, the
crystallographic
recognizer validated on all 17 standard types, the
centers-on-mirrors naming certificate in exact rational arithmetic,
and the fundamental-domain data of Theorem~\ref{thm:fd}; Part~9
verifies the polymatrix-cover results of Section~\ref{sec:poly}
exactly; the seventeen colored-stabilizer enumerations of
Theorem~\ref{thm:exact} are certified by
\path{verify_exactness.py}; and a GAP cross-check via affine
conjugacy (\texttt{gap\_check.g}) is included. The legacy five-seed suite
(\path{experiments/run_wallpaper_via_gluing.py}, seeds
$\{42,7,99,123,456\}$) reports $365/365$ relation instances for the
configurations of the earlier version; seeds demonstrate payoff
independence, not statistical significance, since the graph
construction is deterministic once boundary rules are fixed.

\paragraph{Symmetry classification (27/28).}
Given a cover graph without its tile labeling, a rank-based
classifier estimates the translation subgroup and predicts the
wallpaper class. It recovers 27 of 28 game-and-class combinations on
tiles from $3\times3$ to $6\times6$, missing $pmm$ at $6\times6$
(predicted $p1$), where translation-rank estimation degrades at
scale. Across all sizes, higher symmetry class correlates
monotonically with lower $\beq$ per node.

\paragraph{Reproducibility.}
Verification scripts, raw outputs, and the classifier protocol are
provided in the anonymized repository accompanying the submission.

\section{Related Work}
\label{sec:related}

The support complex, its product structure, its Betti count, and
its equivariant homology are developed self-containedly in
Section~\ref{sec:prelim} and Appendix~\ref{app:tile}. Wallpaper
groups
were classified in the 1890s (Fedorov, Schoenflies); presentations,
the symmorphic/non-symmorphic distinction, and the recognition
criteria used in our naming certificate appear
in~\cite{armstrong,conway2008}. To our knowledge they have not
previously appeared in the equilibrium computation literature.

Symmetry in normal-form games has been studied from several angles.
Papadimitriou and Roughgarden~\cite{papadimitriou2005} give
efficient algorithms for equilibria of multiplayer \emph{symmetric}
games, exploiting symmetry of a single game to reduce its
complexity; our direction is the inverse, constructing systems whose
symmetry group is a chosen wallpaper group, with tile payoffs
playing no role in determining the group. Datta~\cite{datta2010}
surveys polynomial-algebra methods for enumerating all Nash
equilibria; we connect equilibrium geometry to a different algebraic
structure, group actions on covers. Savani and von
Stengel~\cite{savani2006} show Lemke-Howson paths can be
exponentially long even for small games; the amortization result
does not contradict this, since it equates the cover's cost with the
tile's rather than claiming either is small. McKelvey and
McLennan~\cite{mckelveymclennan1997} determine the maximal number of
regular totally mixed equilibria of a generic game; cover equilibria
are instead a structured $\calT$-orbit multiple of tile equilibria
(Lemma~\ref{lem:fold}), a different regime from generic large games.
Covering constructions with deck transformations are classical in
algebraic topology, and in topological graph theory the voltage
graphs of Gross and Tucker~\cite{grosstucker} construct regular
graph coverings with prescribed deck groups; our translations act freely, so $M$ is a regular
$\bZ_r\times\bZ_s$ cover of its multigraph quotient in that sense
(the projection to the tile complex $K$ itself is not a covering
map: seam edges have no counterpart in $K$ and are removed by the
reduced quotient), while the point-group actions, the lattice
classification, and the complexity statements concern structure
beyond the deck group. Frucht-type
realizations~\cite{frucht1939,sabidussi1960} produce ad hoc graphs
with prescribed automorphisms; here the graph is canonically derived from
equilibrium pivoting and the action is the standard crystallographic
one. Neither construction, so far as we know, has previously been
applied to game-theoretic search spaces. Games with local
interaction on lattices are classical in evolutionary game
theory~\cite{nowakmay1992}, polymatrix games go back
to~\cite{janovskaja1968}, and graphical games
to~\cite{kearns2001}; those literatures study dynamics and
equilibrium computation on a fixed interaction structure. The
questions here run in the inverse direction and at the level of the
symmetry taxonomy: which of the seventeen crystallographic groups
act on the interaction at all, which translation-lattice indices
they force, how equilibria fold along an arbitrary symmetry subgroup
with exact incidence bookkeeping, and, by decoration, which games
have a prescribed group as their exact automorphism group. These
distinctions, standard in crystallography, do not appear in the
spatial-game literature so far as we know.

The cover is structurally reminiscent of finitely repeated games,
where one stage game is played $r$ times; the distinction is that
repeated-game stages are sequential with temporally aggregated
payoffs, while cover tiles are spatial copies interacting through
boundary rules, and cover equilibria are not the subgame-perfect
equilibria of any repeated game.

\section{Open Problems}
\label{sec:open}

\begin{enumerate}
  \item \textbf{Exactness.} Theorem~\ref{thm:main} embeds $G$ into
    $\Aut(M)$; on a plain straight cover $\Aut(M)$ also contains the
    ambient lattice symmetries and the gauge group of uniform
    strategy relabelings (Remark~\ref{rem:aut}). Characterize the
    boundary decorations (swap and richer seam alphabets) whose
    decorated symmetry group is exactly $G$ times gauge; this is
    where the swap machinery of the earlier version finds its
    correct role. On the game side this is now resolved:
    Theorem~\ref{thm:exact} gives, for every $G$, a decorated
    polymatrix cover whose automorphism group is exactly $\hat G$.
    The graph-side version remains open: characterize seam
    decorations of the cover $M_{r,s}$ itself with
    $\Aut(M,\mathrm{dec}) = G\times\mathrm{gauge}$ while
    Theorem~\ref{thm:fd} survives the decoration; Frucht-style seam
    gadgets are the candidate tool.
  \item \textbf{The pivot complex of the polymatrix cover.}
    Theorems~\ref{thm:polysym} and~\ref{thm:collapse} give a game
    that carries the symmetry with genuine collapse. Identify a
    canonical pivot or support complex for the polymatrix cover and
    decide whether it is isomorphic to $M_{r,s}$ or a natural
    relative; and for tile games with $B\neq A^{\top}$, classify
    the edge-role patterns, the game-level seam decorations, whose
    compatible symmetry group is exactly a prescribed wallpaper
    group, uniting this with Open Problem~1 and giving the swap
    boundaries their strategic meaning as cross-seam role
    exchange.
  \item \textbf{Minimal realizations in higher dimensions.}
    Lemma~\ref{lem:symmorphic} forces a proper translation
    sublattice for the non-symmorphic four, and
    Theorem~\ref{thm:main}(ii) achieves the minimum, index $2$, for
    all of them. Determine the minimal cover dimensions per group,
    and the analogous minimal indices among the 157 non-symmorphic
    space-group types in dimension three.
  \item \textbf{Approximate tilings.} Rigidity
    (Lemma~\ref{lem:rigid}) is exact; quantify how the folding and
    fundamental domain statements degrade under
    $\varepsilon$-perturbed seam or tile payoffs.
  \item \textbf{Extension to $n$-player games.} Multigame covers
    generalize to $n$-player games; the wallpaper classification is
    then replaced by space group theory in higher dimensions.
\end{enumerate}

\section{Conclusion}
\label{sec:conc}

All seventeen wallpaper groups embed into the automorphism groups
of plain straight multigame covers by verified graph automorphisms
in affine form, each certified as the exact toroidal quotient of its
cover, with
types certified by a crystallographic recognizer in exact rational
arithmetic; and the classical symmorphic line is exactly the lattice
classification of Theorem~\ref{thm:lattice}: realizations whose
translations contain the full tile lattice exist precisely for the
thirteen symmorphic groups, while the four non-symmorphic groups are
realized with proper sublattices, the tile serving as the glide's
half-step. The fundamental domain theorem gives the
construction a precise complexity meaning: straight tiling does not
increase quotient cycle complexity at all, and swap tiling increases
it by exactly $\binom m2$, a payoff-independent constant. In
parallel, at the game level, the folding lemma shows that after a
linear-time detection scan one tile solution yields an orbit of $rs$
equilibria of the duplicated-strategy cover game, and the polymatrix
cover joins the tracks outright, carrying every wallpaper action,
glides included, as genuine game automorphisms whose equilibria fold
along any symmetry subgroup, with a decorated refinement whose
automorphism group is exactly the toroidal wallpaper group; its
canonical pivot complex is the remaining refinement (Open
Problem~2).

A multigame cover is not a large game in disguise: a generic game of
the same footprint has cycle complexity larger by a factor of
$\Theta((rs)^{k-1})$ and no symmetry at all. The broader message is
a local-global principle for equilibrium structure: complexity is
owned by the tile and the gluing pattern, and symmetry, including
symmetry that no individual game can possess, emerges from the
pattern of interaction at scales coarser than any single game.

Games on lattices are classical; what is new here is the direction
of the questions: from a prescribed symmetry group to a tiling, a
lattice index, a folded equilibrium problem, and finally a game that
realizes the group exactly, with the crystallographic taxonomy,
glides and all, doing work that the spatial-game literature has not
asked of it.

The synthesis the results add up to: a game generates a geometry,
its support complex, a product of per-player edit graphs whose
square-filled form carries one exterior square of the standard
representation per player;
tiling that geometry over a lattice spatializes the game, and all
seventeen crystallographic symmetry groups of the plane act on the
result, with the classical symmorphic line appearing as the lattice
classification; payoffs are provably blind to this structure, by
rigidity and by genericity, so the symmetry lives in who plays whom
rather than in what anyone is paid; and when the players themselves
are spatial, the geometry becomes genuine game symmetry, with glides
acting on games and equilibrium computation collapsing along every
symmetry subgroup. Groups measure the interactive shape of games,
and equilibria collapse along that shape.


\appendix

\section{Generator Formulas and the Naming Certificate}
\label{app:generators}

All generators act on triples $(t_i, t_j, v)$ where
$(t_i,t_j)\in\bZ_r\times\bZ_s$ are tile coordinates and
$v\in V(K)$ is a local node. Tile arithmetic is mod $r$ and mod $s$
on finite covers; on $M_\infty$ the same formulas are read over
$\bZ$; replacing each occurrence of $r{-}1{-}t$ or $(-t)\bmod r$ by
$-t$ composes the generator with an available lattice translation,
changing no generated group, and puts every generator in standard
affine crystallographic form. All local actions in
Theorem~\ref{thm:main} are trivial; the nontrivial local actions
$\swp$ and $\rev$ below appear only in labeled-cover refinements
(Remark~\ref{rem:labeled}) and in the legacy configurations.

\paragraph{Generator dictionary.} The full generating set, cover,
and lattice translations for each group; all covers are plain
straight, every generator has trivial local action, and every row is
certified by the accompanying script:
\begin{center}
\small
\begin{tabular}{llll}
\toprule
Group & Generators & Cover & Lattice $(a,b)$ \\
\midrule
$p1$   & $\Trow,\ \Tcol$ & $3\times3$ & $(\Trow,\Tcol)$ \\
$p2$   & $\Trow,\ \Tcol,\ R_{180}$ & $3\times3$ & $(\Trow,\Tcol)$ \\
$pm$   & $\Trow,\ \Tcol,\ S_{\mathrm{row}}$ & $3\times3$ & $(\Tcol,\Trow)$ \\
$cm$   & $\Trow,\ \Tcol,\ S_{\mathrm{diag}}$ & $4\times4$ & $(\Trow,\Tcol)$ \\
$pmm$  & $\Trow,\ \Tcol,\ S_{\mathrm{row}},\ S_{\mathrm{col}}$ & $3\times3$ & $(\Trow,\Tcol)$ \\
$cmm$  & $\Trow,\ \Tcol,\ S_{\mathrm{diag}},\ R_{180}$ & $4\times4$ & $(\Trow,\Tcol)$ \\
$p4$   & $\Trow,\ \Tcol,\ R_{\mathrm{sq}}$ & $4\times4$ & $(\Tcol,\Trow)$ \\
$p4m$  & $\Trow,\ \Tcol,\ R_{\mathrm{sq}},\ S_{\mathrm{diag}}$ & $4\times4$ & $(\Tcol,\Trow)$ \\
$p3$   & $\Trow,\ \Tcol,\ R_3$ & $3\times3$, hex & $(\Trow,\Tcol)$ \\
$p3m1$ & $\Trow,\ \Tcol,\ R_3,\ S_{2}$ & $3\times3$, hex & $(\Trow,\Tcol)$ \\
$p31m$ & $\Trow,\ \Tcol,\ R_3,\ S_{\mathrm{anti}}$ & $3\times3$, hex & $(\Trow,\Tcol)$ \\
$p6$   & $\Trow,\ \Tcol,\ R_6$ & $6\times6$, hex & $(\Trow,\Tcol)$ \\
$p6m$  & $\Trow,\ \Tcol,\ R_6,\ S_{\mathrm{hex}}$ & $6\times6$, hex & $(\Trow,\Tcol)$ \\
$pg$   & $G_{pg},\ \Tcol$ & $4\times4$ & $(\Trow^2,\Tcol)$ \\
$pmg$  & $G_{pg},\ S_{\mathrm{row}},\ \Tcol$ & $4\times4$ & $(\Trow^2,\Tcol)$ \\
$pgg$  & $g,\ h$ & $4\times4$ & $(g^2,\ h^2)$ \\
$p4g$  & $R_{4g},\ g$ & $4\times4$ & $(g^2,\ R g^2 R^{-1})$ \\
\bottomrule
\end{tabular}
\end{center}

\paragraph{Translations (always free):}
$\Trow(t_i,t_j,v) = ((t_i{+}1)\bmod r, t_j, v)$;
$\Tcol(t_i,t_j,v) = (t_i, (t_j{+}1)\bmod s, v)$.

\paragraph{Square $90^\circ$ rotation $R_{\mathrm{sq}}$ ($p4$, $p4m$; $r=s$):}
$R_{\mathrm{sq}}(t_i,t_j,v) = (t_j,\, r{-}1{-}t_i,\, v)$.
Satisfies $R\,\Tcol\,R^{-1} = \Trow$ and
$R\,\Trow\,R^{-1} = \Tcol^{-1}$ (so $a=\Tcol$, $b=\Trow$).

\paragraph{$180^\circ$ rotation $R_{180}$ ($p2$, $cmm$):}
$R_{180}(t_i,t_j,v) = ((-t_i)\bmod r,\, (-t_j)\bmod s,\, v)$.
A variant with local action $\rev$, where
$\rev(\{i\},\{j\}) = (\{m{-}1{-}i\},\{n{-}1{-}j\})$, extended to
mixed supports by
\[\rev(S_i,S_j) = (\{m{-}1{-}i : i\in S_i\},\ \{n{-}1{-}j : j\in S_j\}),\]
is a payoff symmetry of centrosymmetric tiles
($A_{ij} = A_{m-1-i,\,n-1-j}$) and yields the payoff-labeled
refinement of Remark~\ref{rem:labeled}; the realization itself uses
the trivial local action.

\paragraph{Row mirror $S_{\mathrm{row}}$ ($pm$, $pmm$):}
$S_{\mathrm{row}}(t_i,t_j,v) = (r{-}1{-}t_i,\, t_j,\, v)$.
Preserves $\Tcol$ and reverses $\Trow$, so $a=\Tcol$, $b=\Trow$.
The column mirror
$S_{\mathrm{col}}(t_i,t_j,v) = (t_i,\, s{-}1{-}t_j,\, v)$ is
defined symmetrically; $pmm$ uses both.

\paragraph{Diagonal mirror $S_{\mathrm{diag}}$ ($p4m$, $cm$, $cmm$; $r=s$):}
$S_{\mathrm{diag}}(t_i,t_j,v) = (t_j, t_i, v)$.
Satisfies $S\,\Tcol\,S = \Trow$, and for $p4m$ also
$S\,R_{\mathrm{sq}}\,S = R_{\mathrm{sq}}^{-1}$. On its own with the
translations it generates $cm$: the coset
$(S_{\mathrm{diag}},\mu)$, $\mu\in\bZ^2$, contains the true
reflections ($\mu_1+\mu_2=0$, axes $t_i-t_j=\mathrm{const}$)
together with essential glides ($\mu_1+\mu_2$ odd, glide vector half
of the axis primitive $(1,1)$), which is the signature of $cm$;
adding $R_{180}$ gives $cmm$. Both realizations are certified as
exact toroidal quotients (orders $32$ and $64$ on the $4\times4$
cover).

\paragraph{Anti-diagonal mirror $S_{\mathrm{anti}}$ ($p31m$; $r=s$, hex):}
\[S_{\mathrm{anti}}(t_i,t_j,v) = ((-t_j)\bmod s,\, (-t_i)\bmod r,\, v).\]
Satisfies $S\,\Trow\,S = \Tcol^{-1}$, i.e.\ $SaS=b^{-1}$ with
$a=\Trow$.

\paragraph{Second hexagonal mirror $S_{2}$ ($p3m1$; $r=s$, hex):}
\[S_{2}(t_i,t_j,v) = (t_i,\, (-t_i-t_j)\bmod s,\, v),\]
linear part $\left(\begin{smallmatrix}1&0\\-1&-1\end{smallmatrix}\right)$,
which permutes the six hex directions and satisfies $SbS=b^{-1}$,
$SRS=R^{-1}$.

\paragraph{Naming certificate for the hexagonal mirrors.}
The mirrors $S_{\mathrm{anti}}$ and $S_2$ generate, with $R_3$ and
the translations, the two non-conjugate hexagonal reflection groups;
which is $p3m1$ and which is $p31m$ is decided by the standard
criterion~\cite{conway2008}: in $p3m1$ every threefold center lies
on a mirror line, in $p31m$ some do not. In tile coordinates the
threefold centers are $\{(0,0),(\tfrac13,\tfrac13),
(\tfrac23,\tfrac23)\}$ mod $\bZ^2$; exact rational computation
(script, Part~2) shows the $S_{\mathrm{anti}}$ family's mirror lines
are $\{t_i+t_j\in\bZ\}\cup\{t_i\in\bZ\}\cup\{t_j\in\bZ\}$, missing
the centers $(\tfrac13,\tfrac13)$ and $(\tfrac23,\tfrac23)$, while
the $S_2$ family's lines
$\{t_i+2t_j\in\bZ\}\cup\{2t_i+t_j\in\bZ\}\cup\{t_i-t_j\in\bZ\}$
contain all centers. Hence the $S_{\mathrm{anti}}$ realization is
$p31m$ and the $S_2$ realization is $p3m1$. (An earlier version had
these two names interchanged.)

\paragraph{Hexagonal mirror for $p6m$: $S_{\mathrm{hex}} = R_6^3\circ S_{\mathrm{diag}}$.}
Composing $R_6^3$ (the $180^\circ$ element of the order-6 rotation)
with the diagonal mirror corrects the sign of translation
conjugation: $S_{\mathrm{diag}}$ gives $SaS=b$ while
$S_{\mathrm{hex}}$ gives $SaS=b^{-1}$, and both satisfy
$SRS=R^{-1}$.

\paragraph{Order-3 hex rotation $R_3$ ($p3$, $p3m1$, $p31m$; hex lattice):}
\[R_3(t_i,t_j,v) = ((-t_i{-}t_j)\bmod r,\, t_i,\, v).\]
Satisfies $R_3\,\Trow\,R_3^{-1} = \Trow^{-1}\Tcol = a^{-1}b$.

\paragraph{Order-6 hex rotation $R_6$ ($p6$, $p6m$; hex lattice):}
\[R_6(t_i,t_j,v) = ((-t_j)\bmod r,\, (t_i{+}t_j)\bmod s,\, v).\]
Satisfies $R_6\,\Trow\,R_6^{-1} = \Tcol = b$ and
$R_6\,\Tcol\,R_6^{-1} = \Trow^{-1}\Tcol = a^{-1}b$.

\paragraph{Sublattice glide $G_{pg}$ ($pg$; straight cover):}
\[G_{pg}(t_i,t_j,v) = ((t_i{+}1)\bmod r,\, (-t_j)\bmod s,\, v),\]
a reflection composed with a one-tile step. Its square is
$\Trow^{2} =: a$, the lattice translation, so the glide vector is
half of $a$ and essential; every orientation-reversing element of
the generated group has an odd first offset coordinate, so the coset
contains no true reflection, which is the signature of $pg$. The generated group with
$b = \Tcol$ is the exact nonabelian toroidal quotient of order $16$
on the $4\times4$ cover.

\paragraph{Non-symmorphic generators ($pmg$, $pgg$, $p4g$; straight cover):}
$pmg = \langle G_{pg}, S, \Tcol\rangle$ with
$S(t_i,t_j,v) = ((-t_i)\bmod r,\, t_j,\, v)$: the mirror family of
$S$ has true reflections while the family of $G_{pg}$ has only
essential glides, the signature of $pmg$.
$pgg = \langle g, h\rangle$ with
$h(t_i,t_j,v) = ((1{-}t_j)\bmod r,\, (-t_i)\bmod s,\, v)$: two
perpendicular essential diagonal glides sharing the $p4g$ glide $g$,
with lattice the diagonal index-2 sublattice, no mirrors in either
orientation-reversing class (offsets have odd coordinate sum,
respectively odd coordinate difference), and $(gh)^2 = 1$ supplying
the twofold rotations; this $pgg$ is the index-$2$ subgroup
$\langle g, R_{4g}^{2}g\rangle$ of the $p4g$ realization.
$p4g = \langle R_{4g}, g\rangle$ with
$R_{4g}(t_i,t_j,v) = ((-t_j)\bmod r,\, t_i,\, v)$ (the inverse
orientation of $R_{\mathrm{sq}}$; the relation of
Table~\ref{tab:pres} is stated for this orientation, and becomes
$gR g^{-1}=d'R^{-1}$ for the opposite one) and
$g(t_i,t_j,v) = ((t_j{+}1)\bmod s,\, t_i,\, v)$: here $g^2$ is the
diagonal translation, the group's lattice is the diagonal index-2
sublattice, the axis-oriented mirror class contains the true
reflections while the diagonal class contains only essential glides,
and the fourfold centers lie off the mirrors, the signature of
$p4g$. All three are certified as
exact nonabelian toroidal quotients (orders $32$, $32$, $64$ on the
$4\times4$ cover).

\paragraph{Recognition invariants.} The infinite-cover type of each
realization is identified by exact rational computation of the
standard recognition invariants: maximal rotation order; presence of
orientation-reversing elements; which reflection conjugacy classes
contain true mirrors (solving $(I{+}N)(\mu_0{+}\lambda)=0$ over the
translation lattice); the eigen-splitting index of the lattice at a
mirror (distinguishing $pm$/$cm$ and $pmm$/$cmm$); and
centers-on-mirrors for the hexagonal pair. The recognizer is
validated against standard generator sets for all 17 types (script,
Part~6), and an independent GAP cross-check via affine conjugacy to
the IT plane-group tables is provided (\texttt{gap\_check.g}).

\paragraph{Legacy swap glide $G_{\mathrm{row}}$ (audit only):}
$G_{\mathrm{row}}(t_i,t_j,v) = ((t_i{+}1)\bmod r,\, t_j,\, \swp(v))$
where $\swp(\{i\},\{j\}) = (\{j\},\{i\})$, a graph automorphism of
uniform-row-swap covers. Its tile action is a pure translation, so
by Proposition~\ref{prop:template} it generates abelian groups with
the translations; it appears only in the daggered legacy
configurations and in the audit of Part~3 of the script.

\section{Translation Labeling Convention}
\label{app:labeling}

The standard presentations use abstract generators $a$ and $b$ for
the two lattice translations. For the symmorphic realizations the
assignment $\{a,b\} = \{\Trow,\Tcol\}$ is determined by the action
of the rotation or mirror generator: by convention, $a$ is the
translation that the rotation maps toward $b$ (that is,
$RaR^{-1}=b$ or $RaR^{-1}=a^{-1}b$). For groups $pm$, $p4$, $p4m$:
$a = \Tcol$, $b = \Trow$, since
$R_{\mathrm{sq}}\,\Tcol\,R_{\mathrm{sq}}^{-1} = \Trow$. For $pg$
the group's lattice is $\langle\Trow^2,\Tcol\rangle$ and
$a = \Trow^2$.

For $p6$ (two-relation form): the single relation
$rar^{-1}=a^{-1}b$ of~\cite{armstrong} holds for $p3$ but not for
$p6$ on the $6\times6$ cover. On the $p3$ cover ($r=3$),
$\Trow^{-1} = \Trow^{2}$ and $R_3$ maps
$(1,0)\mapsto(2,1)=\Trow^{-1}\Tcol$. On the $p6$ cover ($r=6$),
$R_6$ maps $(1,0)\mapsto(0,1)=\Tcol$, and the two-relation form
$rar^{-1}=b$, $rbr^{-1}=a^{-1}b$ holds with $a=\Trow$, $b=\Tcol$.

\section{Cover Construction Details}
\label{app:constr}

\paragraph{Why uniform swap is required for swap-template glides.}
With alternating swap (odd row boundaries swap, even are straight),
the glide $G_{\mathrm{row}}$ maps the straight edge
$(t_i,t_j,v)\sim(t_i{+}1,t_j,v)$ (even boundary) to the image edge
$(t_i{+}1,t_j,\swp(v))\sim(t_i{+}2,t_j,\swp(v))$. But the actual
boundary at row $t_i{+}1$ (odd) connects $\swp(v)$ to
$\swp^2(v)=v$, not to $\swp(v)$. Image edge and cover edge differ,
so $G_{\mathrm{row}}$ is not a graph automorphism. With uniform swap,
every boundary connects $v$ to $\swp(v)$ and the image edge is
present. (This concerns the legacy template; the realizations of
Theorem~\ref{thm:main} use straight covers throughout.)

\paragraph{Why the hex lattice is required for $p3$/$p6$ groups.}
The square cover gives each tile four tile-neighbors. The hex
rotation $R_3:(t_i,t_j)\mapsto(-t_i{-}t_j,\,t_i)$ maps some
horizontal neighbors to oblique tile-pairs not connected in the
square cover. Adding the oblique edges of
Definition~\ref{def:cover} gives six tile-neighbors, matching the
hexagonal lattice and making $R_3$, $R_6$, and both hexagonal
mirrors graph automorphisms.

\paragraph{Full specification table.}
\begin{center}
\small
\begin{tabular}{lllllll}
\toprule
Group & Game type & $m$ & $r\times s$ & Swap & Hex & $a$ \\
\midrule
$p1$   & generic    & 3 & $3\times3$ & --- & --- & $\Trow$ \\
$p2$   & ctrsy.     & 3 & $3\times3$ & --- & --- & $\Trow$ \\
$pm$   & sym.       & 3 & $3\times3$ & --- & --- & $\Tcol$ \\
$cm$   & generic    & 3 & $4\times4$ & --- & --- & $\Trow$ \\
$pmm$  & periodic   & 4 & $3\times3$ & --- & --- & $\Trow$ \\
$cmm$  & generic    & 3 & $4\times4$ & --- & --- & $\Trow$ \\
$p4$   & $\bZ_4$    & 4 & $4\times4$ & --- & --- & $\Tcol$ \\
$p4m$  & $\bZ_4$-m  & 4 & $4\times4$ & --- & --- & $\Tcol$ \\
$p3$   & $\bZ_3$    & 3 & $3\times3$ & --- & hex & $\Trow$ \\
$p3m1$ & generic    & 3 & $3\times3$ & --- & hex & $\Trow$ \\
$p31m$ & $\bZ_3$-m  & 3 & $3\times3$ & --- & hex & $\Trow$ \\
$p6$   & $\bZ_6$    & 6 & $6\times6$ & --- & hex & $\Trow$ \\
$p6m$  & $\bZ_6$-m  & 6 & $6\times6$ & --- & hex & $\Trow$ \\
$pg$   & generic    & 3 & $4\times4$ & --- & --- & $\Trow^2$ \\
$pmg$  & generic    & 3 & $4\times4$ & --- & --- & $\Trow^2$ \\
$pgg$  & generic    & 3 & $4\times4$ & --- & --- & $\Trow^2$ \\
$p4g$  & generic    & 3 & $4\times4$ & --- & --- & $\Trow\Tcol$ \\
$pmg^\dagger$ & periodic & 4 & $4\times4$ & U-row & --- & $\Trow$ \\
$pgg^\dagger$ & generic  & 3 & $2\times2$ & U-both & --- & $\Trow$ \\
$p4g^\dagger$ & $\bZ_4$  & 4 & $4\times4$ & U-row & --- & $\Tcol$ \\
\bottomrule
\end{tabular}
\end{center}
U = uniform swap (U-both: both directions); m = mirror variant;
ctrsy.\ = centrosymmetric; $\dagger$ = legacy configuration
(Proposition~\ref{prop:template}). Game types are irrelevant to the
realizations themselves (all local actions are trivial); the classes
listed are the payoff-labeled refinements of
Remark~\ref{rem:labeled}.

\section{The \texorpdfstring{$p4g$}{p4g} Swap-Template Search}
\label{app:doubled}

This appendix quarantines the swap-template configurations of an
earlier version. Their glide generators have pure-translation tile
actions, so by Proposition~\ref{prop:template} their listed relation
sets do not present the named groups; their cover and quotient data
remain valid facts about the graphs and are the $m=4$ swap data
cited after Theorem~\ref{thm:fd}:
\begin{center}
\small
\begin{tabular}{lllrrl}
\toprule
Config & Tile & Cover & $\beq(M)$ & $\beq(M/\calT)$ & Listed relations \\
\midrule
$pmg$ template & $4\times4$ & $4\times4$,u & 2753 & 147 &
  $R^2{=}1$, $RaR{=}a^{-1}$, $RbR{=}b$, $G^2{=}a^2$, $GbG^{-1}{=}b$ \\
$pgg$ template & $3\times3$ & $2\times2$,u & 181 & 40 &
  $G^2{=}a^2$, $H^2{=}b^2$, $GHG^{-1}{=}H^{-1}$ \\
$p4g$ template & $4\times4$ & $4\times4$,u & 2753 & 147 &
  $gRg^{-1}{=}R^{-1}$ (fragment) \\
\bottomrule
\end{tabular}
\end{center}
(u = uniform swap; for $r=2$ the doubled seam identifications merge
parallel edges, which is why the $pgg$ template's $\beq(M)$ is
lower than the $r,s\geq3$ formula predicts.)

We further report an exhaustive computational investigation of the
$p4g$ swap-template configuration. In light of
Proposition~\ref{prop:template} and Lemma~\ref{lem:symmorphic},
these results should be read as facts about the swap template and
its normalization ($a$, $b$ the full tile translations, glide
relations $gRg^{-1}=R^{-1}$ and $g^2=a^2$), not as an obstruction
for $p4g$ itself, which Theorem~\ref{thm:main}(ii) realizes in the
diagonal-sublattice normalization forced by
Lemma~\ref{lem:symmorphic}.

\paragraph{Doubled-node search.}
Replacing each node $v$ with two copies $(v,0)$, $(v,1)$,
parameterized by four bits controlling boundary type and generator
actions on the layer bit ($2^4=16$ combinations): in every case, $g$
is a graph automorphism but $R$ is not; the rotation's failure is
independent of the layer-bit action.

\paragraph{Enriched boundary alphabet.}
Allowing four boundary types (identity, player swap, index reversal,
swap composed with reversal), all $4\times4=16$ combinations of row
and column boundary types: $R$ is a graph automorphism only when row
and column boundaries have the same type (4 cases); $g$ satisfies
$g^2=a^2$ only when boundaries are straight or swap (2 cases). No
combination satisfies all listed relations simultaneously.

\paragraph{Affine search in the template normalization.}
Among all affine tile maps
$g: (t_i,t_j)\mapsto (\alpha t_i + \beta t_j + c,\; \beta t_i -
\alpha t_j + f) \pmod r$ satisfying $gRg^{-1}=R^{-1}$ on
$\bZ_4\times\bZ_4$ (the orientation-reversing affine isometries,
$\alpha^2+\beta^2\equiv 1\pmod 4$: $8$ linear parts times $16$
translations, $128$ maps), none simultaneously satisfies
$g^2=a^2=\Tcol^2$.

\medskip\noindent
\textbf{Conclusion.} Within the swap-template normalization no
affine glide exists; in the diagonal-sublattice normalization, $p4g$
is realized (Theorem~\ref{thm:main}(ii)). The searches above stand
as facts about the swap template, relevant to the exactness program
of Open Problem~1.

\section{Equivariant Topology and Genericity of the Tile Complex}
\label{app:tile}

This appendix proves the structural facts about the tile complex
quoted in Section~\ref{sec:prelim}.

\begin{theorem}[Product structure]\label{thm:product}
$K(A,B) = H_k(m)\,\square\,H_k(n)$, the Cartesian product, and it is
connected for $m,n\geq2$, $k\geq2$. Consequently
$\beq(K) = D_k(m)T_k(n) + D_k(n)T_k(m) - T_k(m)T_k(n) + 1$.
\end{theorem}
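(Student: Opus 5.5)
I would prove the three claims in order: product structure directly from the definitions, then connectivity, then the Betti formula from the standard count $\beq = |E|-|V|+1$ for a connected graph.

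\textbf{Product structure.} The vertex set $V(K)$ consists of pairs $(S_i,S_j)$ with $S_i\in V(H_k(m))$ and $S_j\in V(H_k(n))$. This is exactly $V(H_k(m))\times V(H_k(n))$, because the constraints on $S_i$ and $S_j$ are independent. Two nodes of $K$ are adjacent iff they differ by adding or removing exactly one action in exactly one player's support. That means one coordinate is equal and the other satisfies $|S\triangle S'|=1$, which is the Cartesian product adjacency rule verbatim. So $K=H_k(m)\,\square\,H_k(n)$ as graphs.

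\textbf{Connectivity.} A Cartesian product of two connected graphs is connected, so it suffices to show $H_k(m)$ is connected for $m\geq2$, $k\geq2$. Every singleton $\{a\}$ reaches $\{b\}$ through the path $\{a\}-\{a,b\}-\{b\}$, which uses size $2\leq k$. Every set $S$ with $|S|\leq k$ reaches one of its singletons by deleting elements one at a time. The only point needing care is that the hypotheses $k\geq2$ and $m\geq2$ are exactly what allow the two-element bridge. If $k=1$, $H_1(m)$ is edgeless and $K$ is disconnected.

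\textbf{Counting.} Each vertex $S$ of $H_k(m)$ with $|S|=s<k$ has exactly $m-s$ upward neighbours. Counting each edge from its smaller endpoint therefore gives $|E(H_k(m))|=\sum_{s=1}^{k-1}\binom ms(m-s)=D_k(m)$, and $|V(H_k(m))|=T_k(m)$ by definition. For a Cartesian product, $|V(G\square H)|=|V(G)||V(H)|$ and $|E(G\square H)|=|E(G)||V(H)|+|V(G)||E(H)|$: each edge of $G\square H$ is an edge of one factor paired with a vertex of the other. Substituting these into $|E|-|V|+1$, which is valid because $K$ is connected, gives
\[
\beq(K)=D_k(m)T_k(n)+T_k(m)D_k(n)-T_k(m)T_k(n)+1 .
\]

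I expect no genuine obstacle. The only step requiring attention is matching the "single-edit" edge relation of $K$ to the Cartesian product rule exactly: an edit changes only one player's support, and by exactly one action, which excludes simultaneous edits such as the swap pairs of Theorem~\ref{thm:fd}. As a sanity check, $m=n=3$, $k=2$ gives $T=6$, $D=6$ and $\beq=36+36-36+1=37$, matching Table~\ref{tab:results}.
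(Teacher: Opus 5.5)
Your proposal is correct and follows essentially the same route as the paper: you read the single-edit edge relation as Cartesian-product adjacency, prove connectivity by deleting down to a singleton and bridging singletons through a pair, and substitute the product vertex and edge counts into $|E|-|V|+1$. The only addition is that you verify $|E(H_k(m))|=D_k(m)$ by counting upward edits, which the paper builds into the definition of $D_k$; your sanity check $\beq=37$ at $m=n=3$, $k=2$ agrees with the paper's value.
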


\begin{proof}
An edge of $K$ changes exactly one coordinate by a single-element
edit with the other coordinate fixed, which is the edge set of the
Cartesian product. Connectivity of $H_k(m)$: from any $S$, delete
elements one at a time down to a singleton, and join two singletons
$\{i\},\{i'\}$ by the path $\{i\}-\{i,i'\}-\{i'\}$ (valid since
$k\geq2$); a product of connected graphs is connected. The Betti
count follows from $\beq = |E|-|V|+1$ with
$|V| = T_k(m)T_k(n)$ and $|E| = D_k(m)T_k(n)+D_k(n)T_k(m)$.
\end{proof}

\begin{proposition}[Saturation]\label{prop:saturation}
For $m,n\geq3$ and $k\geq2$, $\beq(K)\geq|V(K)|$; moreover, every
vertex of $K$ lies on a Cartesian $4$-cycle.
\end{proposition}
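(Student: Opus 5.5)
Both claims come down to the product structure $K = H_k(m)\,\square\,H_k(n)$ of Theorem~\ref{thm:product} and the closed-form count $\beq(K) = |E(K)| - |V(K)| + 1$. I treat the $4$-cycle claim first because it is purely local. Take a vertex $(S_i,S_j)$. For $m\geq3$, every nonempty $S\subseteq[m]$ with $|S|\leq k$ has at least one neighbor in $H_k(m)$. If $|S|<k$, add any missing element; one exists because $|S|<k\leq m$. If $|S|=k\geq2$, delete an element. Let $S_i'$ be such a neighbor of $S_i$ and $S_j'$ a neighbor of $S_j$. Then
\[
(S_i,S_j)-(S_i',S_j)-(S_i',S_j')-(S_i,S_j')-(S_i,S_j)
\]
is a Cartesian $4$-cycle, since each step edits exactly one coordinate. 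So every vertex lies on one. This part needs only that each factor has no isolated vertices, which already holds for $m,n\geq2$.

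For the inequality, I will show $|E(K)|\geq 2|V(K)|-1$. Write $T_m=T_k(m)$, $D_m=D_k(m)$, and $\bar d_m = 2D_m/T_m$ for the average degree of $H_k(m)$. Then
\[
|E(K)| = D_mT_n + D_nT_m = \tfrac12 T_mT_n(\bar d_m+\bar d_n),
\]
so it suffices to have $\bar d_m+\bar d_n\geq 4$, because then $|E(K)|\geq 2|V(K)|$ and $\beq(K)\geq |V(K)|+1$. My plan is to prove $\bar d_m\geq2$ for $m\geq3$ and $k\geq2$, i.e.\ $D_m\geq T_m$. That is exactly the statement $\beq(H_k(m))\geq1$ for the connected graph $H_k(m)$, which is clear because $H_k(m)$ contains a cycle. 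For instance, with $m\geq3$ and $k\geq2$ we have the $6$-cycle
\[
\{1\}-\{1,2\}-\{2\}-\{2,3\}-\{3\}-\{1,3\}-\{1\}.
\]
Connected with a cycle gives $|E|\geq|V|$, hence $D_m\geq T_m$ and $\bar d_m\geq2$, and the same holds for $n$. Summing gives $|E(K)|\geq 2T_mT_n = 2|V(K)|$, so $\beq(K)\geq|V(K)|+1>|V(K)|$.

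I see no real obstacle here. The one thing to check is that the hypotheses $m,n\geq3$ and $k\geq2$ are exactly what make each factor contain a cycle. For $m=2$, $H_2(2)$ is the path $\{1\}-\{1,2\}-\{2\}$, a tree with $\bar d<2$, and the bound can fail. This explains the hypothesis. As a sanity check against Table~\ref{tab:results}: $m=n=3$ gives $|V(K)|=36$ and $\beq(K)=37$, consistent with the bound.
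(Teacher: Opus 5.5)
Your proof is correct and has the same skeleton as the paper's. Both reduce to the per-factor inequality $D_k(m)\geq T_k(m)$, multiply out $|E(K)|=D_k(m)T_k(n)+D_k(n)T_k(m)\geq 2|V(K)|$, and obtain the $4$-cycle as the Cartesian square of one edge in each factor.

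The difference is how the per-factor inequality is proved. The paper counts: the identity $\binom ms(m-s)=(s+1)\binom m{s+1}$ turns $D_k(m)-T_k(m)$ into $\sum_{t=2}^k(t-1)\binom mt-m\geq\binom m2-m\geq0$. You instead read $D_k(m)\geq T_k(m)$ as $\beta_1(H_k(m))\geq1$ for a connected graph, and certify it with the hexagon $\{1\}-\{1,2\}-\{2\}-\{2,3\}-\{3\}-\{1,3\}-\{1\}$.

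Your version avoids the binomial bookkeeping and explains the hypothesis exactly. For $k\geq2$, $H_k(m)$ is a tree precisely when $m\leq2$, which matches $\beta_1(H_2(m))=\binom{m-1}2$ from Theorem~\ref{thm:specht}. The cost is that it relies on the connectivity proved in Theorem~\ref{thm:product}. The paper's computation is pure counting at this step and gives the explicit margin $D_k(m)-T_k(m)=\beta_1(H_k(m))-1$, for example $m(m-3)/2$ at $k=2$, rather than only its sign.

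There is one small slip in your $4$-cycle argument. You justify the addition step by $|S|<k\leq m$, but the hypotheses do not give $k\leq m$. If $k>m$, the vertex $S=[m]$ has $|S|<k$ and no missing element to add. Split instead on $|S|=1$ (add an element, possible since $m\geq2$) versus $|S|\geq2$ (delete one), which is how the paper phrases it.
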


\begin{proof}
$D_k(m)-T_k(m) = \sum_{t=2}^k(t{-}1)\binom{m}{t}-m \geq
\binom m2 - m \geq 0$ for $m\geq3$ (using
$\binom ms(m{-}s) = (s{+}1)\binom m{s+1}$ to rewrite $D_k$), so
$|E| = D_k(m)T_k(n)+D_k(n)T_k(m) \geq 2\,T_k(m)T_k(n) = 2|V|$ and
$\beq = |E|-|V|+1 \geq |V|+1 > |V|$. For the second claim: every
vertex of $H_k(m)$ has a neighbor when $k\geq2$ (add an element to a
singleton, delete one otherwise), so given $(S_1,S_2)$ pick
$S_1'\sim S_1$ and $S_2'\sim S_2$; the Cartesian square
$(S_1,S_2)$, $(S_1',S_2)$, $(S_1',S_2')$, $(S_1,S_2')$ is a
$4$-cycle through the vertex.
\end{proof}

The group $S_m\times S_n$ acts on $K$ by
relabeling actions, preserving the product CW structure of
$X_k(m,n) := |H_k(m)|\times|H_k(n)|$ (product squares filled). The
graph $K$ is the $1$-skeleton of $X_k$, and the homology computed
below is that of $X_k$, not the graph-cycle rank of $K$: filling the
Cartesian squares kills most of $\beq(K)$ (at $m=n=3$, $k=2$,
$\beq(K)=37$ while $\dim H_1(X_2)=2$). All coefficients are
$\mathbb Q$.

\begin{theorem}[Equivariant K\"unneth]\label{thm:kunneth}
As $S_m\times S_n$-modules,
$H_1(X_k)\cong H_1(H_k(m))\boxtimes\mathbf1 \oplus
\mathbf1\boxtimes H_1(H_k(n))$ and
$H_2(X_k)\cong H_1(H_k(m))\boxtimes H_1(H_k(n))$.
\end{theorem}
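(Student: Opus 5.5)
The plan is to apply the Künneth theorem over $\mathbb Q$ to the product CW complex $X_k = |H_k(m)|\times|H_k(n)|$ and then check that the Künneth isomorphism is compatible with the $S_m\times S_n$ action. First I record the homology of each factor. $|H_k(m)|$ is a connected graph by Theorem~\ref{thm:product}, viewed as a $1$-dimensional CW complex. Hence $H_0 = \mathbb Q$ carries the trivial representation $\mathbf 1$, $H_1(H_k(m))$ is the cycle space, and all higher groups vanish. The action of $S_m$ on $H_0$ is trivial because every relabeling sends a vertex class to a vertex class, and all vertex classes are homologous in a connected graph. The same holds for $H_k(n)$.

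Next I use the cellular chain complex of the product, which is the tensor product $C_*(H_k(m))\otimes C_*(H_k(n))$ with the product cells $e\times f$. Over a field the Künneth map
\[
\bigoplus_{p+q=\ell} H_p(H_k(m))\otimes H_q(H_k(n)) \longrightarrow H_\ell(X_k),\qquad [a]\otimes[b]\mapsto[a\times b],
\]
is an isomorphism, and there is no Tor term. Reading off degrees gives $H_1 = H_1\otimes H_0 \oplus H_0\otimes H_1$ and $H_2 = H_1\otimes H_1$, since $H_2$ of a graph vanishes.

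The remaining step is equivariance. The group $S_m\times S_n$ acts factorwise: $(\sigma,\tau)$ sends $S_i\times S_j$ to $\sigma S_i\times\tau S_j$. This action is cellular and is the product of the two cellular maps, so on chains it equals $\sigma_\#\otimes\tau_\#$. Because the cross product is natural for pairs of maps, the Künneth map intertwines $(\sigma,\tau)$ with $\sigma_*\otimes\tau_*$. That identifies the summands as $S_m\times S_n$-modules as the external tensor products $H_1(H_k(m))\boxtimes\mathbf 1$, $\mathbf 1\boxtimes H_1(H_k(n))$ and $H_1\boxtimes H_1$.

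The one point that needs care is orientation signs. The edges of $H_k(m)$ are oriented, say from the smaller support to the larger, and relabelings preserve cardinality, so they never reverse an edge. As a result $\sigma_\#$ is an honest permutation of oriented cells, and the sign $(-1)^{pq}$ in the product boundary formula is unaffected by the action. No twisting character appears, and the stated isomorphisms hold exactly.
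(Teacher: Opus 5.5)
Your proposal is correct and follows the same route as the paper's proof: rational Künneth with no Tor term, the factorwise $S_m\times S_n$ action made equivariant by naturality of the cross product, $H_0$ of a connected graph being the trivial module, and graphs having no $H_2$. Your orientation-sign paragraph is a harmless extra check. Naturality of the cross product already gives equivariance however the cells are oriented, since any sign would be absorbed into $\sigma_*$ and $\tau_*$ themselves.
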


\begin{proof}
K\"unneth over a field is natural, hence equivariant for the product
action; $H_0$ of a connected graph is the trivial module, and graphs
have no $H_2$.
\end{proof}

\begin{theorem}[$k=2$: one Specht module per player]\label{thm:specht}
For $m\geq3$, $H_2(m)$ is the barycentric edge-subdivision of $K_m$,
and $H_1(H_2(m))\cong S^{(m-2,1,1)}\cong
\Lambda^2\bigl(S^{(m-1,1)}\bigr)$ as $S_m$-modules, of dimension
$\binom{m-1}2$: the exterior square of the standard representation.
\end{theorem}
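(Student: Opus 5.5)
Three claims need proof: the structural identification of $H_2(m)$, the isomorphism $H_1(H_2(m))\cong S^{(m-2,1,1)}$, and the identification of that Specht module with $\Lambda^2 S^{(m-1,1)}$.

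\textbf{Structure.} At $k=2$ the vertices of $H_2(m)$ are the $m$ singletons $\{i\}$ and the $\binom m2$ pairs $\{i,j\}$. Since $|S\triangle S'|=1$, two singletons are never adjacent, and neither are two pairs. So each edge joins $\{i\}$ to some pair $\{i,j\}\ni i$, and each pair $\{i,j\}$ has exactly the two neighbours $\{i\}$ and $\{j\}$. This is precisely $K_m$ with every edge $ij$ subdivided once by the midpoint $\{i,j\}$. The identification is $S_m$-equivariant, because relabelling acts the same way on both descriptions.

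\textbf{Dimension.} Subdivision is a homotopy equivalence, so there is an $S_m$-isomorphism $H_1(H_2(m))\cong H_1(K_m)$. As $K_m$ is connected,
\[
\dim H_1(K_m)=\binom m2-m+1=\binom{m-1}2 .
\]

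\textbf{Character computation.} I would work in the rational chain complex $0\to C_1\to C_0\to 0$ of $K_m$. The sequence $0\to H_1\to C_1\to C_0\to H_0\to 0$ is exact, and every term is a permutation or trivial module. Hence, as virtual characters,
\[
H_1 \;=\; C_1 - C_0 + \mathbf 1 .
\]

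The modules involved are as follows.
\begin{itemize}
\item $C_0$ is the permutation module on points: $C_0=\mathbf 1\oplus S^{(m-1,1)}$.
\item $C_1$ is spanned by oriented edges. Each edge $e=\{i,j\}$ has stabilizer $S_2\times S_{m-2}$, and the transposition $(ij)$ reverses the orientation. So $C_1=\mathrm{Ind}_{S_2\times S_{m-2}}^{S_m}(\mathrm{sgn}\boxtimes\mathbf 1)$.
\item By Pieri's rule for $(1,1)\otimes(m-2)$, this induced module decomposes as $C_1=S^{(m-1,1)}\oplus S^{(m-2,1,1)}$, valid for $m\ge3$.
\end{itemize}
Subtracting gives $H_1\cong S^{(m-2,1,1)}$.

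\textbf{Exterior square.} The last step is the standard fact $\Lambda^2 S^{(m-1,1)}\cong S^{(m-2,1,1)}$. From $\Lambda^2(\mathbf 1\oplus V)=V\oplus\Lambda^2V$, where $V=S^{(m-1,1)}$, this is equivalent to $\Lambda^2\mathbb Q^m=V\oplus S^{(m-2,1,1)}$. That in turn follows because $\Lambda^2\mathbb Q^m$ is exactly the signed-edge module $C_1$ above.

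Alternatively, one can use the direct map: send the oriented edge $i\to j$ to $e_i\wedge e_j$, which identifies $C_1\cong\Lambda^2\mathbb Q^m$. Under this map the boundary $C_1\to C_0$ becomes contraction with the all-ones functional. Its kernel, the cycle space, is then $\Lambda^2$ of the sum-zero hyperplane, since contraction is a derivation whose kernel on $\Lambda^2$ of $\mathbb Q^m=\mathbf 1\oplus V$ is $\Lambda^2V$; this cleanly also confirms the dimension $\binom{m-1}2$.

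\textbf{Main obstacle.} The most delicate point is the Pieri decomposition of the sign-twisted induced module, together with the care needed with orientation signs. I would sidestep it with the direct exterior-algebra identification just described. That route gives $H_1\cong\Lambda^2V$ immediately, and $\Lambda^2V\cong S^{(m-2,1,1)}$ is the classical fact that exterior powers of the standard representation are the hook Specht modules.
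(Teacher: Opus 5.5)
Your proposal is correct. Its first route coincides with the paper's: identify $H_2(m)$ with the edge-subdivided $K_m$, write the oriented edge space as $\mathrm{Ind}_{S_2\times S_{m-2}}^{S_m}(\mathrm{sgn}\boxtimes\mathbf 1)\cong S^{(m-1,1)}\oplus S^{(m-2,1,1)}$ by Pieri, and remove the standard summand accounted for by the boundary map. The only difference there is bookkeeping. The paper kills the $S^{(m-2,1,1)}$ component of $\partial$ by Schur's lemma and uses surjectivity onto the zero-sum subspace. You instead subtract characters along $0\to H_1\to C_1\to C_0\to H_0\to 0$ with $H_0=\mathbf 1$, which is equivalent. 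Your remark that every term is a permutation module is a slip: orientations make $C_1$ a signed permutation module. It is harmless, since only exactness is used.

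Where you genuinely go further is the exterior-square clause, which the paper states but does not prove. You give two arguments for it. The cancellation $V\oplus\Lambda^2V\cong\Lambda^2\mathbb Q^m\cong C_1\cong V\oplus S^{(m-2,1,1)}$ extracts the hook identification from the same Pieri computation. The contraction route sends $e_{ij}\mapsto e_i\wedge e_j$, which turns $\partial$ into contraction with the all-ones functional. The cycle space is then literally $\Lambda^2$ of the sum-zero hyperplane, which exhibits the exterior square canonically and gives the dimension $\binom{m-1}2$ at once.

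The trade-off is this. The contraction route lands on the $\Lambda^2$ label and needs the classical fact $\Lambda^2V\cong S^{(m-2,1,1)}$ (or your cancellation, which reintroduces Pieri) for the Specht label. The paper's route lands on the Specht label and leaves the $\Lambda^2$ label as a standard fact. Taken together, your two steps prove the full statement.
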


\begin{proof}
Each pair $\{i,j\}$ is adjacent exactly to $\{i\}$ and $\{j\}$, and
no two singletons or two pairs are adjacent, so
$H_2(m) = \mathrm{sd}(K_m)$, equivariantly homeomorphic to $K_m$; it
suffices to compute the cycle space of $K_m$ as an $S_m$-module. The
oriented edge space, with $e_{ji}=-e_{ij}$, is
$\mathrm{Ind}_{S_2\times S_{m-2}}^{S_m}(\mathrm{sgn}\boxtimes\mathbf1)
\cong S^{(m-1,1)}\oplus S^{(m-2,1,1)}$ by the Pieri rule. The
equivariant boundary $\partial e_{ij}=v_j-v_i$ surjects onto the
zero-sum subspace of $\mathbb Q^m$, which is $S^{(m-1,1)}$; the two
partitions are distinct for $m\geq3$, so Schur's lemma kills the
nonisomorphic $S^{(m-2,1,1)}$ summand, and surjectivity forces the
restriction to the $S^{(m-1,1)}$ summand to be nonzero, hence an
isomorphism, giving $Z_1(K_m)\cong S^{(m-2,1,1)}$ of dimension
$\binom{m-1}2 = \beq(K_m)$.
\end{proof}

\paragraph{Genericity.} For a support pair $(S_1,S_2)$ the
\emph{equalization system} asks for $q\in\mathbb R^{S_2}$ with
$(Aq)_i$ constant over $i\in S_1$ and $\sum_{j\in S_2}q_j=1$, and
symmetrically for $p$ with $B$; call the pair \emph{feasible} if the
system has a solution and \emph{balanced} if $|S_1|=|S_2|$.

\begin{lemma}[Witness]\label{lem:witness}
Fix $(S_1,S_2)$ with $s_1 := |S_1| > |S_2| =: s_2$ and $i_0\in S_1$,
and let $M$ consist of the difference rows
$(A_{i\cdot}-A_{i_0\cdot})|_{S_2}$, $i\in S_1\setminus\{i_0\}$,
together with the all-ones row, with right side
$b=(0,\ldots,0,1)^{\top}$. Some choice of the entries
$\{A_{ij}: i\in S_1, j\in S_2\}$ gives
$\mathrm{rank}(M\mid b) = s_2+1$.
\end{lemma}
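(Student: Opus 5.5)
We need to exhibit entries making the augmented matrix $(M\mid b)$ have rank $s_2+1$. Since $M$ has $s_1$ rows and $s_2$ columns, while $(M\mid b)$ has $s_2+1$ columns, rank $s_2+1$ is the maximum possible. The number of rows is $s_1\geq s_2+1$, so this maximum is attainable in principle. The plan is therefore to choose entries so that some $(s_2+1)\times(s_2+1)$ minor of $(M\mid b)$ is nonzero. That minor must use the all-ones row together with $s_2$ difference rows.

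\textbf{Setting up the minor.} Index $S_2=\{j_1,\ldots,j_{s_2}\}$, and pick $s_2$ indices $i_1,\ldots,i_{s_2}\in S_1\setminus\{i_0\}$; there are enough because $s_1-1\geq s_2$. Set $A_{i_0 j}=0$ for all $j\in S_2$. Each difference row is then just $A_{i\cdot}|_{S_2}$. Assign the chosen rows so that the differences are the first $s_2$ standard basis vectors of $\mathbb R^{s_2}$, i.e.\ $A_{i_\ell j_{\ell'}}=[\ell=\ell']$. Set every remaining entry of $A$ on $S_1\times S_2$ to $0$. Those leftover rows cannot lower the rank, since rank is monotone under adding rows.

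\textbf{Evaluating the minor.} The square submatrix of $(M\mid b)$ on the rows $i_1,\ldots,i_{s_2}$ and the all-ones row has the block form
\[
\begin{pmatrix} I_{s_2} & 0\\ \mathbf 1^{\top} & 1\end{pmatrix}.
\]
It is lower block-triangular with determinant $1$. Hence $\mathrm{rank}(M\mid b)\geq s_2+1$, and equality follows from the column count. In this construction $M$ itself has rank $s_2$, so the right side $b$ lies outside the column space of $M$. I would note this because it is the fact the ensuing genericity argument presumably uses: the equalization system for this unbalanced pair is infeasible.

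\textbf{Main obstacle.} There is little real difficulty here. The one point to check is that the row count $s_1\geq s_2+1$ is exactly what allows $s_2$ difference rows to sit alongside the ones row, and this is where the hypothesis $s_1>s_2$ enters. If the subsequent argument needs a nonvanishing polynomial rather than a single witness, the determinant above is a polynomial in the entries that is not identically zero. Its nonvanishing at this one point then gives full rank on an open dense, full-measure set of payoffs.
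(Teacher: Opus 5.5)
Your proposal is correct and uses the same construction as the paper: $A$ is zero on row $i_0$, forms an identity block on $s_2$ further rows of $S_1$, and is zero elsewhere, so the difference rows are $(e_t^{\top},0)$ and the all-ones row $(\mathbf 1^{\top},1)$ lies outside their span. Your block-triangular minor with determinant $1$ just states that same observation more formally.
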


\begin{proof}
Order $S_2=\{j_1,\ldots,j_{s_2}\}$, pick distinct
$i_1,\ldots,i_{s_2}\in S_1\setminus\{i_0\}$ (possible since
$s_1\geq s_2+1$), set $A_{i_0\cdot}|_{S_2}=0$,
$A_{i_t j_u}=\delta_{tu}$, and the remaining rows to $0$ on $S_2$.
The difference rows contribute $(e_t^{\top},0)$ and the ones row
contributes $(\mathbf1^{\top},1)$, outside their span.
\end{proof}

\begin{theorem}[Generic discreteness of the feasible set]
\label{thm:genericity}
There is a finite union $Z$ of proper algebraic subsets of the payoff
space such that for all $(A,B)\notin Z$: every feasible pair is
balanced, and the feasible pairs form an independent set in $K$, so
the induced subgraph is edgeless and its first Betti number is $0$.
In particular this holds for Lebesgue-almost-every game.
\end{theorem}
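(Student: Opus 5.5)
The plan is to prove the two claims separately. Balancedness is the only place payoffs enter; independence then follows combinatorially from the edge relation of $K$.

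\emph{Step 1: unbalanced pairs are generically infeasible.} Fix a support pair $(S_1,S_2)$ with $s_1>s_2$. I would write the $A$-side equalization system as $Mq=b$, with $M$ and $b$ exactly as in Lemma~\ref{lem:witness}. This is the usual reformulation: $(Aq)_i$ is constant on $S_1$ exactly when the difference rows annihilate $q$, and the all-ones row imposes $\sum_{j\in S_2}q_j=1$. The matrix $M$ has only $s_2$ columns, so $\rk(M)\le s_2$, and the system is inconsistent as soon as $\rk(M\mid b)=s_2+1$.

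By Lemma~\ref{lem:witness}, some $(s_2+1)\times(s_2+1)$ minor of $(M\mid b)$ is nonzero at a particular choice of the entries $A_{ij}$, $(i,j)\in S_1\times S_2$. That minor is a polynomial in these entries, so it is not identically zero, and its zero set $Z_{S_1,S_2}$ is a proper algebraic subset of payoff space. Off $Z_{S_1,S_2}$ the $A$-system has no solution, so the pair is infeasible, since feasibility requires both systems to be solvable.

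The case $s_1<s_2$ is symmetric. I would apply Lemma~\ref{lem:witness} to the $B$-side system for $p\in\mathbb R^{S_1}$, with the roles of the two players exchanged (transpose $B$). Let $Z$ be the union over the finitely many support pairs of size at most $k$. Then $Z$ is a finite union of proper algebraic sets, and for every $(A,B)\notin Z$ each feasible pair is balanced.

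\emph{Step 2: independence.} Every edge of $K$ adds or removes one action from exactly one player's support. It therefore changes exactly one of $|S_1|,|S_2|$ by one. If $(S_1,S_2)$ is balanced, every neighbor $(S_1',S_2')$ has $\bigl||S_1'|-|S_2'|\bigr|=1$, so the neighbor is unbalanced. Hence off $Z$ no two feasible pairs are adjacent. The induced subgraph on the feasible pairs is then edgeless, a forest with no cycles, so its first Betti number is $0$.

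\emph{Step 3: measure.} A proper algebraic subset of $\mathbb R^{2mn}$ is Lebesgue-null, because a nonzero polynomial vanishes only on a null set. A finite union of null sets is null, which gives the almost-every statement.

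\emph{Main obstacle.} The only delicate point is in Step 1: passing from ``some choice of entries gives full augmented rank'' to ``the bad set is a proper algebraic set.'' The plan is to use a single fixed minor, namely the one that is nonzero at the witness. The fact to check is that this minor depends polynomially on the payoff entries alone, with no dependence on $q$. That holds here, because $b$ is a constant vector and $M$ is linear in $A$.
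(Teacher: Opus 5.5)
Your proposal is correct and follows the paper's proof essentially step for step. The paper uses the same $Mq=b$ reformulation with the witness lemma making the rank-deficient locus a proper algebraic set; you take the zero set of one fixed minor that is nonzero at the witness, the paper takes the common zero locus of all $(s_2+1)$-minors, and either choice works. The remaining steps also match: the same transposed argument handles $s_2>s_1$, and independence comes from the fact that an edge changes exactly one support size by one, which the paper phrases as a change of parity in $|S_1|+|S_2|$.
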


\begin{proof}
For an unbalanced pair with $s_1>s_2$, the
$q$-half of the system is $Mq=b$ with $M$ as in
Lemma~\ref{lem:witness}; since $M$ has $s_2$ columns,
$\mathrm{rank}(M\mid b)=s_2+1$ implies infeasibility, and the locus
where every $(s_2{+}1)$-minor of $(M\mid b)$ vanishes is a proper
algebraic subset by the lemma. Take $Z$ to be the union over the
finitely many unbalanced pairs; when $s_2>s_1$, the identical
argument applies to the $p$-equalization equations determined by
$B$. For independence: every edge of $K$
changes $|S_1|+|S_2|$ by exactly one, so no edge joins two balanced
pairs, and off $Z$ no unbalanced pair is feasible. Nonnegativity,
exact-support, and off-support best-response inequalities can only
shrink the equalization-feasible set, so the statement covers every
equilibrium-feasibility convention.
\end{proof}

That equilibria of nondegenerate games have balanced supports is
classical~\cite{vonstengel2002}; the content recorded here beyond
that is the generic infeasibility of the equalization system itself
at every unbalanced pair, with an explicit witness, and the parity
observation making the feasible set topologically discrete. This is
the precise sense in which $\beq(K)$ and all quotient statements of
this paper concern the ambient arena rather than equilibrium
topology.

\end{document}